\begin{document}

\begin{frontmatter}
\title{Hilfer fractional advection-diffusion equations with
power-law initial condition; a Numerical study using variational
iteration method}

\author{\textbf{Iftikhar Ali$^1$ and Nadeem Malik}}

\address{
Department of Mathematics and Statistics\\
King Fahd University of Petroleum and Minerals\\
Box 5046 Dhahran, 31261, Saudi Arabia.\\
{\large\rm \[Under\ Review,\ Computers\ And\ Mathematics\ With\ Applications\]}}

\fntext[label2]
{Corresponding author.\\
E-mail addresses: iali@kfupm.edu.sa (I. Ali), namalik@kfupm.edu.sa and nadeem\_malik@cantab.net (N. Malik).}

\begin{abstract}
We propose a Hilfer advection-diffusion equation of order
$0<\alpha<1$ and type $0\leq\beta\leq1$, and find the power series
solution by using variational iteration method. Power series
solutions are expressed in a form that is easy to implement
numerically and in some particular cases, solutions are expressed
in terms of  Mittag-Leffler function. Absolute convergence of
power series solutions is proved and the sensitivity of the
solutions is discussed with respect to changes in the values of
different parameters. For power law initial conditions it is
shown that the Hilfer advection-diffusion PDE gives the same
solutions as the Caputo and Riemann-Liouville advection-diffusion
PDE. To leading order, the fractional solution compared to the non-fractional solution
increases rapidly with $\alpha$ for $\alpha > 0.7$ at a given time
$t$; but for $\alpha<0.7$ this factor is weakly sensitive to
$\alpha$. We also show that the truncation errors, arising when
using the partial sum as approximate solutions, decay
exponentially fast with the number of terms $n$ used. We find that
for $\alpha< 0.7$ the number of terms needed is weakly sensitive
  to the accuracy level and to the fractional order, $n\approx 20$;
   but for $\alpha>0.7$ the required number of terms increases rapidly
    with the accuracy level and also with the fractional order $\alpha$.
\end{abstract}

\newtheorem{theorem}{Theorem}[section]
\newtheorem{lemma}[theorem]{Lemma}
\newtheorem{remark}[theorem]{Remark}
\newtheorem{definition}[theorem]{Definition}
\newtheorem{corollary}[theorem]{Corollary}
\allowdisplaybreaks

\begin{keyword}
Hilfer advection-diffusion equation \sep Analytical approximate
solution \sep Variational iteration method \sep Mittag-Leffler
function \sep Convergence of solution \sep Numerical analysis.\\
2010 Mathematics Subject Classification. 35R11, 35C05, 35C10,
35E15, 35G25, 65M15 and 65G99.
\end{keyword}

\end{frontmatter}

\section{Introduction}\label{intro}

Many transport phenomenon such as the time evolution of chemical
or biological species in a flow field are often modeled by partial
differential equations. These PDE's are of the
advection-diffusion-reaction type and can be derived from mass
balance, momentum balance and energy balance equations and in the
case of multi-component species, we also have individual species
mass balance equations, see \cite{hund} and \cite{bird}.

In a single-phase single-component system, let us denote the
scalar concentration field by $c(x,t)$ at the position $x$ and at
the time instant $t$, whose transport is described by the non-linear
advection-diffusion-reaction equation,
\begin{equation}\label{eq7}
\frac{\partial{c(x,t)}}{\partial{t}}+\frac{\partial}{\partial
x}P(u(x,t),c(x,t))=-\frac{\partial}{\partial
x}\left(J_c\right)+f(x,t,c(x,t)).
\end{equation}
where $u(x,t)$ is the velocity field in which the scalar is transported, $P(u,c)$ is a non-linear convective flux, $J_c(x,t)$ is the scalar flux, and $f(x,t,c(x,t))$ is the source/reaction term. See Appendix A for details on how this eqaution is derived.

Many physical phenomena appearing in the studies of fluid
mechanics, astrophysics, ground water flow, \cite{saenton},
meteorology, \cite{marchuk}, \cite{holton}, semiconductors,
\cite{ellis},  and reactive flows, \cite{oran}, are modeled by Eq.
\eqref{eq7}. The nonlinear advection-diffusion equation also
proves to be effective in describing the behavior of two-phase
flow in oil reservoir, \cite{aziz}, non-newtonian flows,
\cite{glow},  front propagation, \cite{menko}, traffic flow,
\cite{kest}, financial modeling, \cite{piro}.

Although the above mathematical models adequately describe a lot of natural
phenomena, there still exist many complex phenomena in nature
which are not described adequately by these models. Among them are
crowded systems, such as protein diffusion within cells,
\cite{weiss}, and diffusion through porous media, \cite{chen}.
 So there is a need to develop new models to understand
 such complex phenomena.  In this regard fractional
calculus could be helpful in describing such complex phenomena,
see \cite{old}, \cite{mlros}, \cite{igor} and \cite{dieth}. For 
instance, in order to obtain a better understanding of anomalous
diffusion Caputo \cite{cap99} used fractional calculus to incorporates
memory.

Fractional calculus continues to attract the attention of researchers
in physics, biology, chemistry and other engineering sciences \cite{saba}. 
The reason lies in its ability to explain the complex systems,
such as, anomalous diffusion in porous media, crowding in living cells, time evolutionary
processes which depend on the past history. Experimental evidence of anomalous diffusion have
been reported by Hilfer \cite{hilfer2} while working on dielectric spectroscopy
(in particular, glassy formations and relaxations in polymers). Joen et al.
\cite{jeon} have reported their findings  about the evidence of anomalous diffusion
in living organism, and Tabie et al. \cite{tabei}, during the study of crowded
systems (intra cellular transport of insulin in granules), have witnessed the anomalous
diffusion of insulin in cells.

Anomalous diffusion, and transport through porous medium, can be
understood as a random walk processes, \cite{havlin},  especially
continuous time random walk (CTRW) models. Conventional, Brownian
CTRW, is characterised by waiting times and jumps in particle
location whose probability density function are Gaussian and the
pdf obeys the classical advection-diffusion equation. Anomalous
diffusion, on the other hand, possesses pdf's of waiting times and
jumps which are inverse power laws, and it can be shown that such
a process is described by fractional advection-diffusion
equations.

Furthermore, in a standard diffusion process the mean square
displacement, MSD, $<x(t)^2>$  of a particle is linearly related
to the time $t$, but in anomalous diffusion MSD $<x(t)^2>$ has a
nonlinear power relationship with the time $t$, that is, $<x(t)^2>
\propto t^{\alpha}$. For $0<\alpha<1$ diffusion is called
subdiffusion; for $\alpha=1$, we have the standard diffusion; for
$1<\alpha<2$ the diffusion is termed as superdiffusion. Mean
square displacement can be understood geometrically as the amount
of space the particle has explored in the system. For more
details, see Metzler and Klafter \cite{metzler1}. They have
obtained the following fractional advection-diffusion equation,
\begin{equation}\label{eq10} 
\frac{\partial}{\partial
t}c(x,t) =\ _0\mathcal{D}^{1-\alpha}_t\left[K_{\alpha}\frac{\partial^2
c(x,t)}{\partial x^2}-A_{\alpha}\frac{\partial}{\partial
x}\left\{u(x)c(x,t)\right\}\right].
\end{equation}
where $c(x,t)$ is a scalar field,
$_0\mathcal{D}^{1-\alpha}_t$ is the Riemann-Liouville
fractional derivative of order $0<\alpha<1$ defined in Eq.
\eqref{pre2}, $K_{\alpha}$ and $A_{\alpha}$ are called the
generalized diffusion constants.  

 Note that Eq. \eqref{eq10} reduces to the problem considered by Caputo \cite{cap99}
 by taking $A_{\alpha}=0$ and replacing $1-\alpha$ by $\alpha$
 (using the same symbol). Also note that by setting
 $K_{\alpha}=1$,  $A_{\alpha}=1$, and $\alpha=1/2$ in Eq.
 \eqref{eq10}, we obtain a time fractional diffusion equation of
 order $1/2$ which was considered by Das in \cite{sdas} and
 by Saha in \cite{saha}.

In the present study, we propose a similar equation to \eqref{eq10}
but with the Riemann-Liouville fractional derivative
$_0\mathcal{D}^{1-\alpha}_t$ is replaced by the Hilfer fractional
derivative $_0\mathcal{D}_t^{\alpha,\beta}$, (defined later in Eq.
\eqref{pre5}). Hilfer fractional derivative
$_0\mathcal{D}_t^{\alpha,\beta}$ is a sort of interpolation between
the Riemann-Liouville fractional derivative and Caputo fractional
derivative, see \cite{hilfer1}. Example of a physical system that
can be modeled by Hilfer fractional derivative is given by Hilfer
in \cite{hilfer2}.

\section{Problem statement: Hilfer fractional advection-diffusion system}

We interpret $c(x,t)$ as a diffusive scalar field (for example temperature or
concentration), and we use the simplifying assumption
$K_{\alpha}=A_{\alpha}=\kappa=$constant in Eq. \eqref{eq10} and
obtain the following linear fractional advection-diffusion
equation,
\begin{equation}\label{eq11}
\frac{\partial{c(x,t)}}{\partial{t}} = \kappa\
_0\mathcal{D}_t^{\alpha,\beta}\left[\frac{\partial^2c(x,t)}{\partial
x^2}-\frac{\partial}{\partial x}\left\{u(x)c(x,t)\right\}\right],\quad x
> 0, t > 0.
\end{equation}

Equation \eqref{eq11} is called the Hilfer advection-diffusion
equation of order $\alpha$ and type $\beta$; $\kappa>0$  represents diffusivity; $u(x)$ represents the velocity
field. Note that equation \eqref{eq11} reduces to equation \eqref{eq10}
by taking $\beta = 0$ and relabeling $\alpha$ by $1-\alpha$. Also note that
in \cite{metzler2} Sandev considers a diffusion-reaction equation that
involves Hilfer fractional derivative but without the advection term.

Here we outline the main objectives of the present study. Firstly,
we find the power series solution of Eq. \eqref{eq11}, with
$u(x)=-x$ and initial condition $c(x,0)=f(x)$ by using variational
iteration method, described in Section (\ref{secvim}). Secondly,
we represent the power series solution in a convenient form that
is easy to use for numerical purposes, especially we give a
recurrence relation for the $x-$part in the $n$th term of the
series. Thirdly, we prove the absolute convergence of the series
solution accompanied by some examples. Fourthly, we analyze the
behavior of the fractional solution with respect to the parameters
$\alpha$, $\kappa$, and $p$; and we discuss the numerical
convergence of the solutions which arise when we use the truncated
series solution. Finally, we examine the fractional solutions for
$\alpha>0$ and compare with the conventional solution for
$\alpha=0$ in order to elucidate trends in the solution as
$\alpha$ increases for different parameter values $p$ and
$\kappa$.

 We have organized this paper as follows: in Section \ref{prelim},
  we provide some basic definitions and results from
 fractional calculus; in Section \ref{secvim}, we describe
the variational iteration method to obtain  the solution of the
problem \eqref{eq11} subject to initial condition; in Section
\ref{case}, we present a case study of polynomial uploading; in
Section \ref{num}, we discuss the numerical results and provide
graphs of the solutions along with error analysis; in Section
\ref{FrVsCon}, we compare the fractional solutions with the
corresponding conventional versions; and in the last Section
\ref{conc}, we state our conclusions of the study.
\section{Preliminaries}\label{prelim}
In this section, we briefly discuss the importance and significance
of time fractional derivatives and Hilfer-composite
time fractional derivative, and state some definitions and results from
fractional calculus, see \cite{metzler2, kilbas}.

It was shown by Hilfer that time fractional derivatives are equivalent to
infinitesimal generators of generalized time fractional evolutions, which
arise in the transition from microscopic to macroscopic time scales \cite{hilfer2,hilfer3}.
Hilfer showed that this transition from ordinary time derivative to fractional
time derivative indeed arises in physical problems \cite{hilfer1,hilfer4,metzler2,tom}.


\vskip 0.2cm \noindent\textbf{\emph{Riemann-Liouville Fractional
Integral }}
 of order $\alpha$ for an absolutely integrable function $f(t)$ is
defined by
\begin{equation}\label{pre1} 
\left(_0I_t^{\alpha}f\right)(t):=\frac{1}{\Gamma(\alpha)}\int_0^t \frac{f(\tau)}{(t-\tau)^{1-\alpha}} d\tau,\quad
t > 0, \alpha>0
\end{equation}
when the right hand side exists.
\vskip 0.2cm \noindent\textbf{\emph{Riemann-Liouville Fractional
Derivative }}
 of
order $\alpha>0$ for an absolutely integrable function $f(t)$ is
defined by
\begin{equation}\label{pre2} 
\left(_0D_t^{\alpha}f\right)(t)=\frac{1}{\Gamma(1-\alpha)}\frac{d}{dt}\int_{0}^{t}\frac{f(\tau)}{(t-\tau)^{\alpha}}d\tau,
\qquad t>0, \quad 0<\alpha<1
\end{equation}
\vskip 0.2cm \noindent\textbf{\emph{Caputo Fractional
Derivative }}
 of
order $\alpha>0$ for a function $f(t)$, whose first derivative is
absolutely integrable, is defined by

\begin{equation}\label{pre3} 
\left(_0^*
D_t^{\alpha}f\right)(t)=\frac{1}{\Gamma(1-\alpha)}\int_{0}^{t}\frac{f^{'}(\tau)}{(t-\tau)^{\alpha}}d\tau,
 \qquad t>0, \quad 0<\alpha<1
\end{equation}

\vskip 0.2cm \noindent\textbf{\emph{Relationship between
Riemann-Liouville and Caputo Fractional Derivative}} \vskip 0.1cm

\begin{gather}\label{pre4} 
\begin{aligned}
\left(_0^*D_t^{\alpha}f\right)(t)& =\ _0D_t^{\alpha}f(t)-f(0^+)\frac{t^{-\alpha}}{\Gamma(1-\alpha)}\\
&=\ _0D_t^{\alpha}\left[f(t)-f(0^+)\right].
\end{aligned}
\end{gather}

\vskip 0.2cm \noindent\textbf{\emph{Hilfer Fractional
Derivative }}
 of
order $\alpha>0$ and type $\beta$ for an absolutely integrable
function $f(t)$ with respect to $t$ is defined by,
\begin{equation}\label{pre5} 
\left( _0D_t^{\alpha,\beta}f\right)(t)=\left(_0I_t^{\beta(1-\alpha)}\frac{d}{dt}\ _0I_t^{(1-\beta)(1-\alpha)}f\right)(t),
\qquad t>0, \quad 0<\alpha<1, 0\leq\beta\leq1.
\end{equation}

\begin{lemma}\label{lma1}\cite{srivastava2009fractional}
  The following fractional derivative formula holds true:
\begin{equation}\label{pre9}
_0D_t^{\alpha,\beta}\left( t^{\gamma}\right) = \frac{\Gamma(1+\gamma)}{\Gamma(1+\gamma-\alpha)}\left( t^{\gamma - \alpha}\right),
\qquad t>0, \quad \gamma > -1,
\end{equation}
where $0 < \alpha < 1$ and $0 \leq \beta  < 1$.
\end{lemma}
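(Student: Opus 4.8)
The plan is to unfold the definition of the Hilfer derivative in \eqref{pre5} and reduce everything to the elementary \emph{power rule} for the Riemann--Liouville fractional integral, namely $\left(_0I_t^{\mu}t^{\gamma}\right)(t)=\frac{\Gamma(1+\gamma)}{\Gamma(1+\gamma+\mu)}\,t^{\gamma+\mu}$, valid for $\gamma>-1$ and $\mu>0$. I would establish this auxiliary identity first, since it is the single computational engine behind the whole statement, and then apply it three times in succession.

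To prove the power rule I would substitute $\tau=ts$ into the defining integral \eqref{pre1}:
\begin{equation*}
\left(_0I_t^{\mu}t^{\gamma}\right)(t)=\frac{1}{\Gamma(\mu)}\int_0^t \frac{\tau^{\gamma}}{(t-\tau)^{1-\mu}}\,d\tau=\frac{t^{\gamma+\mu}}{\Gamma(\mu)}\int_0^1 s^{\gamma}(1-s)^{\mu-1}\,ds,
\end{equation*}
recognize the remaining integral as the Beta function $B(\gamma+1,\mu)=\Gamma(\gamma+1)\Gamma(\mu)/\Gamma(\gamma+\mu+1)$, and cancel $\Gamma(\mu)$. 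The hypothesis $\gamma>-1$ is exactly what guarantees convergence of the Beta integral at $s=0$.

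With this in hand I would carry out the three operations in \eqref{pre5} in order. Writing $\mu_1=(1-\beta)(1-\alpha)$ and $\mu_2=\beta(1-\alpha)$, the inner integral yields $\frac{\Gamma(1+\gamma)}{\Gamma(1+\gamma+\mu_1)}t^{\gamma+\mu_1}$; differentiating once in $t$ and using $\Gamma(1+\gamma+\mu_1)=(\gamma+\mu_1)\Gamma(\gamma+\mu_1)$ produces $\frac{\Gamma(1+\gamma)}{\Gamma(\gamma+\mu_1)}t^{\gamma+\mu_1-1}$; finally the outer integral $_0I_t^{\mu_2}$ contributes the factor $\Gamma(\gamma+\mu_1)/\Gamma(\gamma+\mu_1+\mu_2)$, so that the $\Gamma(\gamma+\mu_1)$ factors telescope. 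The decisive simplification is the identity $\mu_1+\mu_2=(1-\alpha)\bigl[(1-\beta)+\beta\bigr]=1-\alpha$, which collapses the surviving exponent to $\gamma+\mu_1-1+\mu_2=\gamma-\alpha$ and the surviving Gamma argument to $1+\gamma-\alpha$, reproducing \eqref{pre9} exactly. Observe that the type $\beta$ enters only through the split $\mu_1+\mu_2$ and then cancels, which is precisely why the answer is independent of $\beta$.

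The point requiring genuine care is not the algebra but the \emph{domain of validity}. The intermediate derivative carries exponent $\gamma+\mu_1-1$, and reapplying the power rule to it demands $\gamma+\mu_1-1>-1$, that is $\gamma+\mu_1>0$; since $\mu_1=(1-\beta)(1-\alpha)$ can be made arbitrarily small as $\beta\to1$, this inequality does not follow from $\gamma>-1$ alone for every admissible triple $(\gamma,\alpha,\beta)$, and the classical outer integral may diverge at the origin. I would therefore either restrict attention to the parameter range in which each iterated integral converges and then extend to all $\gamma>-1$ by analytic continuation in $\gamma$ (both sides being meromorphic in $\gamma$ with matching poles at the nonpositive integers shifted by $\alpha$), or interpret the composition in \eqref{pre5} directly as a well-defined operator on the monomial $t^{\gamma}$. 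This continuation step, rather than any single computation, is where the real subtlety lies.
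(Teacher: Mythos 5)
You should know at the outset that the paper contains no proof of this lemma: it is stated as a quotation from Srivastava and Tomovski \cite{srivastava2009fractional}, and that citation is the paper's entire argument. Your derivation is therefore a genuinely different and more informative route: it is self-contained, it reduces everything to the Beta-integral power rule for \eqref{pre1} applied across the composition \eqref{pre5}, and the telescoping of Gamma factors together with the identity $(1-\beta)(1-\alpha)+\beta(1-\alpha)=1-\alpha$ makes visible the structural fact the paper invokes repeatedly but never explains (after Eqs. \eqref{pol8} and \eqref{pol16}): the type parameter $\beta$ enters only through the splitting of $1-\alpha$ and cancels identically. The algebra in your three steps is correct (note only that for $\beta=0$ the outer operator is the identity, so the power rule is applied just twice and no convergence issue arises, which is why the Riemann--Liouville case is unproblematic on the whole range $\gamma>-1$). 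Your closing caveat about the domain of validity is a genuine catch rather than pedantry, and it is in fact sharper than you state: for $\beta\in(0,1)$ and $-1<\gamma<-(1-\beta)(1-\alpha)$ the outer integral in \eqref{pre5} diverges at the origin, while at the boundary point $\gamma=-(1-\beta)(1-\alpha)$ the classical composition exists but equals zero (the inner integral is then the constant $\Gamma\bigl(1-(1-\beta)(1-\alpha)\bigr)$, whose derivative vanishes), whereas the right-hand side of \eqref{pre9} equals $\Gamma(1+\gamma)\,t^{\gamma-\alpha}/\Gamma\bigl(\beta(1-\alpha)\bigr)\neq 0$; so on part of the stated range $\gamma>-1$ the formula can only be read as the continuation from $\gamma>-(1-\beta)(1-\alpha)$, exactly as you propose. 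One small correction to your parenthetical: as a function of $\gamma$ the right-hand side has poles only at the negative integers, coming from $\Gamma(1+\gamma)$; the points $\gamma=\alpha-1,\alpha-2,\dots$ are zeros supplied by $1/\Gamma(1+\gamma-\alpha)$, not poles, though nothing in your argument depends on this.
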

From lemma \ref{lma1}, We can easily have the following lemma.

\begin{lemma}\label{lma2}
  On integrating equation \eqref{pre9}, we obtain
\begin{equation}\label{pre7}
\int_0^t \ _0D_t^{\alpha,\beta}\left( t^{\gamma}\right) dt = \frac{\Gamma(1+\gamma)}{\Gamma(1-\alpha+\gamma+1)}
\left( t^{1- \alpha + \gamma}\right), \qquad t>0, \quad \gamma > -1,
\end{equation}
where $0 < \alpha < 1$ and $0 \leq \beta < 1$.
\end{lemma}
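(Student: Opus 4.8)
The plan is to prove Lemma \ref{lma2} directly by integrating the right-hand side of the formula \eqref{pre9} established in Lemma \ref{lma1}, so the main work reduces to evaluating an elementary power-law integral and simplifying the Gamma-function factors. First I would take the result of Lemma \ref{lma1} as given, namely
\begin{equation*}
_0D_t^{\alpha,\beta}\left(t^{\gamma}\right) = \frac{\Gamma(1+\gamma)}{\Gamma(1+\gamma-\alpha)}\, t^{\gamma-\alpha}.
\end{equation*}
Since the constant prefactor does not depend on the integration variable, I would pull it outside the integral and compute $\int_0^t s^{\gamma-\alpha}\,ds$.

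The key step is the integration itself: because $\gamma > -1$ and $0 < \alpha < 1$, the exponent satisfies $\gamma - \alpha > -2$, but more to the point the antiderivative requires $\gamma-\alpha \neq -1$; one should note that the integral $\int_0^t s^{\gamma-\alpha}\,ds = \frac{t^{\,\gamma-\alpha+1}}{\gamma-\alpha+1}$ converges at the lower endpoint precisely when $\gamma-\alpha+1 > 0$, i.e. $\gamma > \alpha - 1$, which is guaranteed here since $\gamma > -1 > \alpha - 1$. Carrying out the integration therefore yields
\begin{equation*}
\int_0^t {}_0D_t^{\alpha,\beta}\left(s^{\gamma}\right)\,ds = \frac{\Gamma(1+\gamma)}{\Gamma(1+\gamma-\alpha)}\cdot\frac{t^{\,1-\alpha+\gamma}}{1-\alpha+\gamma}.
\end{equation*}

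The final step is to absorb the denominator $1-\alpha+\gamma$ into the Gamma functions using the fundamental recurrence $\Gamma(z+1) = z\,\Gamma(z)$. Setting $z = 1-\alpha+\gamma$ gives $(1-\alpha+\gamma)\,\Gamma(1-\alpha+\gamma) = \Gamma(2-\alpha+\gamma) = \Gamma(1-\alpha+\gamma+1)$, and since $\Gamma(1+\gamma-\alpha) = \Gamma(1-\alpha+\gamma)$ this is exactly the factor that combines with the existing denominator to produce $\Gamma(1-\alpha+\gamma+1)$ in the denominator, matching the stated right-hand side. I do not anticipate any real obstacle here; the only point worth a moment's care is verifying the convergence condition $\gamma-\alpha+1 > 0$ at the lower limit, which the hypothesis $\gamma > -1$ secures, and correctly applying the Gamma recurrence to rewrite the algebraic denominator as a shift in the Gamma argument.
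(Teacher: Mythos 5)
Your route is exactly the paper's: the paper's entire proof of Lemma \ref{lma2} is the phrase ``on integrating equation \eqref{pre9}'', and you simply carry that integration out --- pull the constant $\Gamma(1+\gamma)/\Gamma(1+\gamma-\alpha)$ outside, integrate the power, and absorb the factor $1-\alpha+\gamma$ into the Gamma function via $\Gamma(z+1)=z\,\Gamma(z)$. That algebra is correct and is precisely what the paper intends.

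However, your convergence check contains a genuine error. You assert that the lower-endpoint condition $\gamma-\alpha+1>0$ ``is guaranteed here since $\gamma>-1>\alpha-1$.'' That inequality chain is backwards: since $0<\alpha<1$, we have $\alpha-1\in(-1,0)$, so in fact $\alpha-1>-1$. Consequently the hypothesis $\gamma>-1$ does \emph{not} imply $\gamma>\alpha-1$; for example $\gamma=-0.9$, $\alpha=0.5$ satisfies $\gamma>-1$ but gives exponent $\gamma-\alpha=-1.4$, for which $\int_0^t s^{\gamma-\alpha}\,ds$ diverges at $s=0$. What your argument actually establishes is the identity under the strictly stronger hypothesis $\gamma>\alpha-1$; on the range $-1<\gamma\le\alpha-1$ the left-hand side of the stated identity does not exist, so the lemma as printed is itself too generous in its hypothesis (a defect of the paper, which your false inequality papers over rather than exposes). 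This does no damage downstream, since every use of the lemma in the paper has $\gamma=k(1-\alpha)\ge 0$, but your write-up should either restate the hypothesis as $\gamma>\alpha-1$ or explicitly note the discrepancy instead of claiming the given hypothesis secures convergence.
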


\noindent\textbf{\emph{Remarks:}} \vskip 0.2cm
\begin{enumerate}
\item The Caputo derivative represents a type of regularization in the
time domain (origin) for Riemann-Liouville derivative.
\item Hilfer fractional derivative interpolates between
Riemann-Liouville fractional derivative and Caputo fractional
derivative, because if $\beta=0$ then Hilfer fractional derivative
corresponds to Riemann-Liouville fractional derivative and if
$\beta=1$ then Hilfer fractional derivative corresponds to Caputo
fractional derivative.
\item $f(0^+)$ is required to be finite.
\item The three derivatives are equal if $f$ is continuous on $[0,T]$ and $f(0^+)=0$,
see Lemma \ref{lma3}.
\end{enumerate}

\vskip 0.2cm \noindent\textbf{\emph{Mittag-Leffler Function }}
\vskip 0.2cm Mittag-Leffler function is the generalization of
exponential function $e^z=\sum_{k=0}^{\infty}\frac{z^k}{k!}$.
\vskip 0.2cm \noindent\textbf{\emph{1-parameter Mittag-Leffler
Function }}

\begin{equation}\label{pre7}
E_{\alpha}(z)=\sum_{k=0}^{\infty}\frac{z^k}{\Gamma(\alpha k+1)},
\quad \alpha>0.
\end{equation}

\vskip 0.2cm \noindent\textbf{\emph{2-parameter Mittag-Leffler
Function }}
\begin{equation}\label{pre8}
E_{\alpha,\beta}(z)=\sum_{k=0}^{\infty}\frac{z^k}{\Gamma(\alpha
k+\beta)}, \quad \alpha>0, \beta>0.
\end{equation}

The motivation of studying fractional equations of form \eqref{eq11} is,
from one side, the Hilfer generalized time fractional derivative \eqref{pre5},
which combine both the derivatives, Caputo and R–L. It is known, from the
continuous time random walk (CTRW) theory, that the probability density
$f(x, t)$, in case where the characteristic waiting time diverges and the
jump length variance is finite, can be obtained from the following two
equivalent representations of the fractional diffusion equation \cite{metzler3, tom}
$$
_0D_{t}^\mu f(x,t) -f(x,0^{+}) \frac{t^{-\mu}}{\Gamma(1-\mu)}
= \kappa \frac{\partial^2}{\partial x^2}f(x,t)
$$
$$
_0^*D_{t+}^\mu f(x,t) = \kappa \frac{\partial^2}{\partial x^2}f(x,t)
$$
in the R–L and Caputo sense, respectively, where $\kappa$ is the generalized
diffusion constant of physical dimension $[\kappa] = m^2/s^\mu$, and $\mu$
is the anomalous diffusion exponent. Thus, if the initial conditions are
properly taken into account, the Caputo and Riemann-liouville formulations
of the time-fractional diffusion-advection equations  are identical.

\section{Variational Iteration Method}\label{secvim}
In this section, we describe the variational iteration method,
\cite{waz}, and provide an outline for its implementation. The VIM
has been used extensively by several authors, see \cite{zhao,
noor, obidat}, in recent years to obtain series solutions of
problems arising in different areas of applied mathematics and
engineering. VIM has been successfully applied to solve problems
like Riccati equation, heat equation, wave equation and many other
problems. Ji-Huan He, \cite{he1, he2}, proposed VIM to obtain the
solutions of nonlinear differential equations. The method provides
the solution in the form of a successive approximations that may
converge to the exact solution if such a solution exists. In case
where a closed form of the exact solution is not achievable, we
use the truncated series, for instance, the $n$th partial sum of
the series. VIM has certain advantages over the other proposed
methods like Adomian decomposition method (ADM), \cite{waz}, and
homotopy perturbation method (HPM), see \cite{obidat}. In the case
of ADM a lot of work has to be done to compute the Adomian
polynomials for nonlinear terms and in the case of HPM, a huge
amount of calculation has to be done when degree of nonlinearity
increases. On the other hand,
 no specific requirements are needed for nonlinear operators in
order to use VIM. For instance, HPM requires an introduction of
small parameter that is sometimes difficult to incorporate in the
equation or its introduction may change the physics of the
problem. \vskip 0.2cm \par The basic concepts and main steps for
the implementation of VIM are explained here. Consider the
following equation:
\begin{equation}\label{vim1}
\frac{\partial}{\partial
t}c(x,t) =\ _0\mathcal{D}_t^{\alpha,\beta}[(Ac)(x,t)]+g(x,t),
\end{equation}
where $_0\mathcal{D}_t^{\alpha,\beta}(.)$ represents the Hilfer
fractional derivative with respect to time variable $t$, and $A$
represents a differential operator with respect to variable $x$.
\vskip 0.2cm \par The variational iteration method presents a
correctional functional in $t$-direction for Eq. \eqref{vim1} in
the form,
\begin{equation}\label{vim2}
c_{n+1}(x,t)=c_{n}(x,t)+\int_0^t\lambda(\xi)\left(\frac{\partial{c_n(x,\xi)}}{\partial{\xi}} - \
_0\mathcal{D}_t^{\alpha,\beta}[A(\widetilde{c_n}(x,t))]-g(x,t)\right)d\xi,
\end{equation}
with $c_n$ assumed known, where $\lambda(\xi)$ is a general
Lagrange multiplier which can be identified optimally by
variational theory and $\tilde{c}_n$ is a restricted value that
means it behaves like a constant, hence $\delta\widetilde{c}_n=0$,
where $\delta$  is the variational derivative. \vskip 0.2cm \par
The VIM is implemented in two basic steps, see \cite{waz};
\begin{enumerate}
\item the determination of the Lagrange multiplier $\lambda(\xi)$
that will be identified optimally through variational theory, and
\item with $\lambda(\xi)$ determined, we substitute the result
into Eq. \eqref{vim2} where the restriction should be omitted.
\end{enumerate}
\vskip 0.2cm \par Taking the $\delta-$variation of Eq.
\eqref{vim2} with respect to $c_n$, we obtain
\begin{equation}\label{vim3}
\delta c_{n+1}(x,t)=\delta
c_{n}(x,t)+\delta\int_0^t\lambda(\xi)\left(\frac{\partial{c_n(x,\xi)}}{\partial{\xi}} - \
_0\mathcal{D}_t^{\alpha,\beta}[A(\widetilde{c_n}(x,t))]-g(x,t)\right)d\xi.
\end{equation}
\vskip 0.2cm \par Since $\delta \widetilde{c_n}=0$ and $\delta
g=0$, we have
\begin{equation}\label{vim4}
\delta c_{n+1}(x,t)=\delta
c_{n}(x,t)+\delta\int_0^t\lambda(\xi)\left(\frac{\partial{c_n(x,\xi)}}{\partial{\xi}}\right)d\xi.
\end{equation}
\vskip 0.2cm \par To determine the Lagrange multiplier
$\lambda(\xi)$,  we integrate by parts the integral in  Eq.
\eqref{vim4}, and noting that variational derivative of a constant
is zero, that is, $\delta k=0$. Hence  Eq. \eqref{vim4} yields
\begin{align}\label{vim5}
\nonumber \delta c_{n+1}(x,t)&=\delta c_{n}(x,t)+\delta
c_n(x,\xi)\lambda(\xi)|_{\xi=t}-\int_0^t\frac{\partial}{\partial\xi}\lambda(\xi)\delta
c_n(x,\xi)d\xi\\
&=\delta
c_{n}(x,t)(1+\lambda(\xi)|_{\xi=t})-\int_0^t\frac{\partial}{\partial\xi}\lambda(\xi)\delta
c_n(x,\xi)d\xi
\end{align}
\vskip 0.2cm \par The extreme values of $c_{n+1}$ requires that
$\delta c_{n+1}=0$. This means that left hand side of Eq.
\eqref{vim5} is zero, and as a result the right hand side should
be zero as well, that is,
\begin{equation}\label{vim6}
\delta
c_{n}(x,t)(1+\lambda(\xi)|_{\xi=t})-\int_0^t\frac{\partial}{\partial\xi}\lambda(\xi)\delta
c_n(x,\xi)d\xi=0
\end{equation}
\vskip 0.2cm \par This yields the stationary conditions
\begin{gather}
1+\lambda(\xi)|_{\xi=t}=0\\
\text{and }\lambda'(\xi)=0\\
\text{which implies } \lambda=-1.
\end{gather}
\vskip 0.2cm \par Hence  Eq. \eqref{vim2} becomes
\begin{equation}\label{vim7}
c_{n+1}(x,t)=c_{n}(x,t)-\int_0^t\left(\frac{\partial{c_n(x,\xi)}}{\partial{\xi}} - \
_0\mathcal{D}_t^{\alpha,\beta}[A(c_n(x,t))]-g(x,t)\right)d\xi,
\end{equation}
where the restriction is removed on $c_n$. We can use Eq.
\eqref{vim7} to obtain the successive approximation of the
solution of the problem \eqref{vim1}. The zeroth approximation
$c_0(x,t)$ can be chosen in such away that it satisfies  the
initial condition and the boundary conditions. Appropriate
selection of the zeroth approximation is necessary for the
convergence of the successive approximation to the exact solution
of the problem.

However, we remark that beacause VIM involves derivatives of all
$c_n$'s inside the integral in the above equation, then VIM is limited to smooth initial conditions. It is not suitable for initial conditions such as $c_0(x)=\delta(x)$ -- the latter would produce a Green's function for the
physical problem. Nevertheless, provided smooth initial conditions
can be specified then VIM is often fast and very effective, as demonstrated
in the case studies below.

\subsection{Solution of the Problem} We consider the equation

\begin{equation}\label{vim8}  
\frac{\partial{c(x,t)}}{\partial{t}}=\kappa
\ _0\mathcal{D}_t^{\alpha,\beta}\left[\frac{\partial^2c(x,t)}{\partial
x^2}-\frac{\partial}{\partial
x}\left\{u(x)c(x,t)\right\}\right],\quad x > 0, t > 0
\end{equation}
with initial condition $c(x,0)=f(x).$ \vskip 0.2cm \par According
to the variational iteration method, we consider the correctional
functional in $t$-direction by using Eq. \eqref{vim2}
\begin{equation}\label{vim9} 
c_{n+1}(x,t)=c_{n}(x,t)+\int_0^t\lambda(\xi)\left(\frac{\partial{c_n(x,\xi)}}{\partial{\xi}}-\kappa
\ _0\mathcal{D}_\xi^{\alpha,\beta}\left[\frac{\partial^2\widetilde{c}_n(x,\xi)}{\partial
x^2}-\frac{\partial}{\partial
x}\left\{u(x)\widetilde{c}_n(x,\xi)\right\}\right]\right)d\xi.
\end{equation}
Now by using Eq. \eqref{vim7} we obtain
\begin{equation}\label{vim10}  
c_{n+1}(x,t)=c_{n}(x,t)-\int_0^t\left(\frac{\partial{c_n(x,\xi)}}{\partial{\xi}}-\kappa
\ _0\mathcal{D}_\xi^{\alpha,\beta}\left[\frac{\partial^2c_n(x,\xi)}{\partial
x^2}-\frac{\partial}{\partial
x}\left\{u(x)c_n(x,\xi)\right\}\right]\right)d\xi.
\end{equation}
which simplifies to
\begin{equation}\label{vim11}  
c_{n+1}(x,t)=c_{n}(x,0)+\kappa\int_0^t \ _0\mathcal{D}_\xi^{\alpha,\beta}\left[\frac{\partial^2c_n(x,\xi)}{\partial
x^2}-\frac{\partial}{\partial
x}\left\{u(x)c_n(x,\xi)\right\}\right]d\xi.
\end{equation}

Starting with an initial approximation
$c_0(x,t)=c(x,0)=f(x)$, we obtain a sequence of successive
approximations, and the exact solution is obtained by taking the
limit of the $n$th approximation, that is,
\begin{equation}\label{vim12}  
c(x,t)=\lim_{n\rightarrow \infty}c_{n}(x,t).
\end{equation}

\section{A Case Study}\label{case}

\subsection{Polynomial Uploading} We take $u(x)=-x$
in Eq. \eqref{vim8}, so it becomes
\begin{equation}\label{pol1} 
\frac{\partial{c(x,t)}}{\partial{t}} = \kappa
\ _0\mathcal{D}_t^{\alpha,\beta}\left[\frac{\partial^2c(x,t)}{\partial
x^2}+\frac{\partial}{\partial
x}\left\{xc(x,t)\right\}\right],\quad x
> 0, t > 0
\end{equation}
with the initial condition $c(x,0)=x^p$, for $p\geq 0$. We obtain
the following iteration formula by using Eq. \eqref{vim10}
\begin{equation}\label{pol2}  
c_{n+1}(x,t)=c_{n}(x,0) + \kappa \int_0^t \ _0\mathcal{D}_\xi^{\alpha,\beta}\left[\frac{\partial^2c_n(x,\xi)}{\partial
x^2}+\frac{\partial}{\partial
x}\left\{xc_n(x,\xi)\right\}\right]d\xi,
\end{equation}
with the zeroth approximation
\begin{equation}\label{pol3}  
c_0(x,t)=x^p.
\end{equation}
By taking $n=0$ in Eq. \eqref{pol2} and using Eq. \eqref{pol3}, we
obtain
\begin{equation}\label{pol4}  
c_{1}(x,t)=x^p+\kappa\int_0^t \ _0\mathcal{D}_\xi^{\alpha,\beta}\left[\frac{\partial^2}{\partial
x^2}x^p+\frac{\partial}{\partial
x}\left\{x^{p+1}\right\}\right]d\xi,
\end{equation}
which can be written as
\begin{equation}\label{pol5}  
c_{1}(x,t)=x^p+\kappa
a_1(x)\int_0^t \ _0\mathcal{D}_\xi^{\alpha,\beta}(1)d\xi,
\end{equation}
where
\begin{equation}\label{pol6}  
a_1(x)=\frac{\partial^2}{\partial x^2}x^p+\frac{\partial}{\partial
x}\left\{x^{p+1}\right\}.
\end{equation}
By using Lemma \ref{lma1} and \ref{lma2}, we obtain
%
\begin{equation}\label{pol8}  
c_{1}(x,t)=x^p+\kappa
a_1(x)\frac{t^{1-\alpha}}{\Gamma(1-\alpha+1)}.
\end{equation}

Importantly, note that for the functions of type $ f(t) = t^\nu$, where $\nu > -1$,
the Hilfer fractional derivative is independent of $\beta$ by Lemma \ref{lma1}.
Moreover, for such functions, the Caputo, Reimann-Liouville and Hilfer
derivatives are all equal.

Furthermore, because this is the first term in the recurrence
relation, $\beta$ does not appear in any of the higher order terms
(below).

By taking $n=1$ in Eq. \eqref{pol2} and using Eq. \eqref{pol8}, we
obtain
\begin{equation}\label{pol9}  
c_{2}(x,t)=x^p+\kappa
a_1(x)\frac{t^{1-\alpha}}{\Gamma(1-\alpha+1)}+\kappa^2
a_2(x)\frac{t^{2(1-\alpha)}}{\Gamma(2(1-\alpha)+1)},
\end{equation}
where
\begin{equation}\label{pol10}   
a_2(x)=\frac{\partial^2}{\partial
x^2}a_1(x)+\frac{\partial}{\partial x}\left\{xa_1(x)\right\}.
\end{equation}
By taking $n=2$ in Eq. \eqref{pol2} and using Eq. \eqref{pol9}, we
obtain
\begin{equation}\label{pol11}  
c_{3}(x,t)=x^p+\kappa
a_1(x)\frac{t^{1-\alpha}}{\Gamma(1-\alpha+1)}+\kappa^2
a_2(x)\frac{t^{2(1-\alpha)}}{\Gamma(2(1-\alpha)+1)}+\kappa^3
a_3(x)\frac{t^{3(1-\alpha)}}{\Gamma(3(1-\alpha)+1)},
\end{equation}
where
\begin{equation}\label{pol12}  
a_3(x)=\frac{\partial^2}{\partial
x^2}a_2(x)+\frac{\partial}{\partial x}\left\{xa_2(x)\right\}.
\end{equation}
Proceeding in this way we obtain
\begin{equation}\label{pol13}  
c_{n}(x,t)=x^p+\Sigma_{k=1}^{n}\kappa^k
a_k(x)\frac{t^{k(1-\alpha)}}{\Gamma(k(1-\alpha)+1)},
\end{equation}
where 
\begin{equation}\label{pol14}   
a_k(x)=\frac{\partial^2}{\partial
x^2}a_{k-1}(x)+\frac{\partial}{\partial
x}\left\{xa_{k-1}(x)\right\},
\end{equation}
and $a_1(x)$ is given by the Eq. \eqref{pol6}. By setting
$a_0(x)=x^p$, Eq. \eqref{pol13} can be written as
\begin{equation}\label{pol15}  
c_{n}(x,t)=\Sigma_{k=0}^{n}
a_k(x)\frac{\kappa^kt^{k(1-\alpha)}}{\Gamma(k(1-\alpha)+1)}.
\end{equation}
By taking the limit $n\rightarrow\infty$ of Eq. \eqref{pol15} we
obtain
\begin{equation}\label{pol16}  
c(x,t)=\lim_{n\rightarrow\infty}c_{n}(x,t)=\Sigma_{k=0}^{\infty}
a_k(x)\frac{\kappa^kt^{k(1-\alpha)}}{\Gamma[k(1-\alpha)+1]}.
\end{equation}

We remark, again, that the final solution above does not contain
any dependency on $\beta$. This is a consequence of the Lemma \ref{lma1}.

\subsection{On the Convergence of $c(x,t)$}\label{sec4p2}

\begin{theorem}
  The series solution \eqref{pol16} of the problem \eqref{pol1} with the
initial condition $c(x,0) = x^p$, $p\geq 0$, converges absolutely
  for all $x$ and $t$.
\end{theorem}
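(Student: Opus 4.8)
The plan is to establish absolute convergence by constructing an explicit majorant for the series \eqref{pol16} and dominating it, term by term, by a single Mittag-Leffler function, which is entire and hence finite for every argument. The starting point is a structural observation about the spatial coefficients. Writing $\mathcal{L}[\phi] := \partial_x^2\phi + \partial_x\{x\phi\}$, the recurrence \eqref{pol14} is exactly $a_k = \mathcal{L}[a_{k-1}]$, so $a_k = \mathcal{L}^k[x^p]$. Since $\mathcal{L}[x^m] = (m+1)x^m + m(m-1)x^{m-2}$, each application of $\mathcal{L}$ reproduces the power $x^p$ (with its coefficient multiplied by $p+1$) and feeds only into the lower powers $x^{p-2}, x^{p-4}, \dots$. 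Hence every $a_k$ has the form $a_k(x) = \sum_j c_{k,j}\,x^{p-2j}$, with leading term $(p+1)^k x^p$; for a non-negative integer $p$ the generated powers terminate at $x^0$ or $x^1$, so this is a genuine polynomial of fixed degree $p$ with a bounded number of terms.

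The heart of the proof is a growth estimate, which I would isolate as a lemma: for each fixed $x$ there is a constant $C(x)$ with $|a_k(x)| \le C(x)\,(p+1)^k$ for all $k$. Reading off the scalar recurrence from \eqref{pol14},
\begin{equation*}
c_{k,j} = (p-2j+1)\,c_{k-1,j} + (p-2j+2)(p-2j+1)\,c_{k-1,j-1}, \qquad c_{0,0}=1,
\end{equation*}
the diagonal solves $c_{k,0}=(p+1)^k$ exactly, and I would then induct on the level $j$: each homogeneous rate $(p-2j+1)$ satisfies $|p-2j+1|\le p+1$ over the admissible levels, while the forcing inherited from level $j-1$ is $O((p+1)^k)$, so no $c_{k,j}$ can outgrow $(p+1)^k$; summing the finitely many levels yields the bound. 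Feeding this into the absolute series and comparing with the definition \eqref{pre7} gives
\begin{equation*}
\sum_{k=0}^{\infty} |a_k(x)|\,\frac{\kappa^{k}\,|t|^{k(1-\alpha)}}{\Gamma\!\left(k(1-\alpha)+1\right)} \;\le\; C(x)\sum_{k=0}^{\infty}\frac{\big((p+1)\kappa\big)^{k}\,|t|^{k(1-\alpha)}}{\Gamma\!\left(k(1-\alpha)+1\right)} \;=\; C(x)\,E_{1-\alpha}\!\big((p+1)\kappa\,|t|^{1-\alpha}\big),
\end{equation*}
and since $E_{1-\alpha}$ is entire the right-hand side is finite for all $x$ and $t$, proving absolute convergence. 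Equivalently one may apply the ratio test directly to \eqref{pol16}: the decisive factor is the Gamma-ratio $\Gamma(k(1-\alpha)+1)/\Gamma((k+1)(1-\alpha)+1)\to 0$ (the argument increasing by $1-\alpha>0$), which drives the ratio of successive terms to zero as soon as $|a_{k+1}(x)/a_k(x)|$ is known to stay bounded.

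The main obstacle is the growth lemma itself, that is, certifying that none of the lower-order coefficients $c_{k,j}$ grows faster than the leading $(p+1)^k$. For polynomial (integer $p$) initial data this is clean, because only finitely many levels occur and all the homogeneous rates are bounded by $p+1$. The delicate case is non-integer $p$, where the recurrence generates arbitrarily negative powers $x^{p-2j}$ whose homogeneous rates $|p-2j+1|$ eventually exceed $p+1$; there the constant $C(x)$ must be controlled uniformly in the level, and the behaviour as $x\to 0^+$ (where the negative powers blow up) requires a more careful, level-weighted estimate than the crude comparison above. Once the growth of the $a_k$ is pinned down, the remaining comparison against the entire function $E_{1-\alpha}$ is routine.
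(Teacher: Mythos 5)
Your proposal is correct for integer $p\ge 0$ --- which is the case the paper's own proof actually covers --- but it takes a genuinely different route. The paper argues via the ratio test: it bounds $|a_{n+1}(x)/a_n(x)|$ by $p+1$ by \emph{approximating} $a_n(x)$ by its leading term $(p+1)^n x^p$, and then drives the ratio of successive terms to zero by showing $\Gamma[n(1-\alpha)+1]/\Gamma[(n+1)(1-\alpha)+1]\to 0$ using Wendel's double inequality. You instead prove a growth lemma $|a_k(x)|\le C(x)(p+1)^k$ from the exact coefficient recurrence $c_{k,j}=(p-2j+1)c_{k-1,j}+(p-2j+2)(p-2j+1)c_{k-1,j-1}$ (which is indeed the correct recurrence for the operator $\partial_x^2+\partial_x\{x\,\cdot\,\}$), and then majorize the whole series by the entire function $C(x)\,E_{1-\alpha}\bigl((p+1)\kappa\,|t|^{1-\alpha}\bigr)$. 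Your route buys rigor and a quantitative bound: the paper's leading-term step is a heuristic, not an estimate, whereas your level-by-level induction controls every coefficient, and your majorant is exactly the leading-order form the paper later invokes in Eq.~\eqref{exp5}; you also dispense with Wendel's inequality, since comparison with $E_{1-\alpha}$ absorbs the Gamma asymptotics. Two small remarks. First, in the induction you should note that for $j\ge 1$ the homogeneous rate satisfies the \emph{strict} bound $|p-2j+1|\le p-1<p+1$; if the rate equalled $p+1$ while the forcing were of size $(p+1)^k$, the solution of the recurrence would grow like $k(p+1)^k$ rather than $(p+1)^k$ --- though even a bound of the form $Ck^{j}(p+1)^k$ would still be annihilated by $\Gamma[k(1-\alpha)+1]$, so nothing breaks. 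Second, your caution about non-integer $p$ is not a gap relative to the paper: the paper's proof also tacitly assumes integer $p$ (it calls $p$ ``the integer power of $x$'' and uses that each $a_n$ is a polynomial of degree $p$), even though its closing sentence claims convergence ``for all real $p$''; for non-integer $p$ the powers $x^{p-2j}$ never terminate and the factors $(p-2j+2)(p-2j+1)$ grow like $4j^2$, so the statement itself would need re-examination near $x=0$, exactly as you flag.
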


\begin{proof}
We denote the $n$th term of Eq. \eqref{pol16} by
$$s_n(x,t)=a_n(x)\frac{\kappa^nt^{n(1-\alpha)}}{\Gamma[n(1-\alpha)+1]}.$$
Applying the ratio test on the series \eqref{pol16}, we obtain
\begin{align}\label{con1}  
\nonumber \left|\frac{s_{n+1}(x,t)}{s_n(x,t)}\right|&=\left|\frac{
a_{n+1}(x)}{a_{n}(x)}\kappa t^{(1-\alpha)}\frac{\Gamma[n(1-\alpha)+1]}{\Gamma[(n+1)(1-\alpha)+1]}\right|\\
\nonumber &\text{since } (1-\alpha)>0 \text{ therefore } n(1-\alpha)+1>1\\
&=\left|\frac{ a_{n+1}(x)}{a_{n}(x)}\kappa
t^{(1-\alpha)}\right|\frac{\Gamma[n(1-\alpha)+1]}{\Gamma[(n+1)(1-\alpha)+1]}.
\end{align}
Note that $\left|\frac{ a_{n+1}(x)}{a_{n}(x)}\right|$ is bounded
above by $p+1$, where $p\geq 0$ is the integer power of $x$ in the
initial condition $c(x,0)=x^p$. Indeed, $a_n(x)$, defined in Eq.
\eqref{pol14}, is a polynomial in $x$ whose leading term, that is,
the term with the highest power of $x$ is $(p+1)^nx^p$, and
further note that degree$(a_n(x))=p$ for all $n\geq 0$. Thus we
can approximate $a_n(x)$ by its leading term $(p+1)^nx^p$ (since
all the coefficients are positive) and therefore we obtain
$$\left|\frac{
a_{n+1}(x)}{a_{n}(x)}\right|\approx\left|\frac{(p+1)^{n+1}x^p}{(p+1)^{n}x^p}\right| = p+1.$$
By using Wendel's double inequality, see \cite{feng},
$$x^{1-s}\leq\frac{\Gamma(x+1)}{\Gamma(x+s)}\leq(x+s)^{1-s},$$
for $x>0$ and $0<s<1$, we deduce that
$$\lim_{n\rightarrow\infty}\frac{\Gamma[n(1-\alpha)+1]}{\Gamma[(n+1)(1-\alpha)+1]}=0.$$
Hence, Eq. \eqref{con1} gives
\begin{align}\label{con2}  
\lim_{n\rightarrow\infty}\left|\frac{s_{n+1}(x,t)}{s_n(x,t)}\right|=0.
\end{align}
Thus the series solution obtained in Eq. \eqref{pol16} converges
(absolutely) for all $x$ and $t$ and for all real $p$.
\end{proof}

\subsection{To show that $c(x,t)$ obtained in Eq.
\eqref{pol16} satisfies Eq. \eqref{pol1}}

\begin{theorem}
  The series solution \eqref{pol16} satisfies the equation \eqref{pol1}
  with the initial condition $c(x,0) = x^p$, where $p\geq 0$.
\end{theorem}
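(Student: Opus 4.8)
The plan is to substitute the series \eqref{pol16} directly into both sides of \eqref{pol1} and to show that, term by term, the time derivative on the left reproduces exactly the series produced by applying the spatial operator followed by the Hilfer derivative on the right. Before that, I would dispose of the initial condition: since $k(1-\alpha)>0$ for every $k\geq 1$, each term with $k\geq 1$ carries a strictly positive power of $t$ and vanishes as $t\to 0^+$, so only the $k=0$ term survives and $c(x,0)=a_0(x)=x^p$, as required.

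For the equation itself, I would first compute the left-hand side by differentiating \eqref{pol16} termwise in $t$. Using $\Gamma[k(1-\alpha)+1]=k(1-\alpha)\,\Gamma[k(1-\alpha)]$, the $k=0$ term drops out and one obtains
\begin{equation*}
\frac{\partial c}{\partial t}=\sum_{k=1}^{\infty} a_k(x)\,\frac{\kappa^k\,t^{k(1-\alpha)-1}}{\Gamma[k(1-\alpha)]}.
\end{equation*}
For the right-hand side, let $L:=\frac{\partial^2}{\partial x^2}+\frac{\partial}{\partial x}\{x\,\cdot\}$ denote the spatial operator. The defining recurrence \eqref{pol14} is precisely $L\,a_{k}=a_{k+1}$, so applying $L$ to \eqref{pol16} merely shifts the index, turning $a_k$ into $a_{k+1}$ while leaving the time factors untouched. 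Next I would apply the Hilfer derivative $_0\mathcal{D}_t^{\alpha,\beta}$ to each monomial $t^{k(1-\alpha)}$ through Lemma \ref{lma1} (with $\gamma=k(1-\alpha)$), multiply by $\kappa$, and reindex via $j=k+1$. The key point is the pair of algebraic identities $(j-1)(1-\alpha)-\alpha=j(1-\alpha)-1$ and $(j-1)(1-\alpha)+1-\alpha=j(1-\alpha)$, which collapse the exponent and the Gamma argument so that the right-hand side becomes term-for-term identical to the expression for $\partial c/\partial t$ displayed above.

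The main obstacle is not the algebra, which is routine, but the justification for moving the time derivative, the spatial operator, and especially the integro-differential operator $_0\mathcal{D}_t^{\alpha,\beta}$ inside the infinite sum. For this I would invoke the absolute convergence established in the preceding theorem: the ratio test used there shows that the series and each of its formally differentiated counterparts converge uniformly on every compact set in $(x,t)$, since the dominant ratio of Gamma functions tends to zero faster than any geometric factor can grow. This uniform control legitimizes termwise differentiation in both $t$ and $x$, and, because $_0\mathcal{D}_t^{\alpha,\beta}$ is built from the Riemann-Liouville integral \eqref{pre1} composed with an ordinary derivative, the same uniform bounds permit interchanging the operator with the summation. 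Once the interchange is justified, the term-by-term identity derived above completes the proof.
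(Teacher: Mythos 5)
Your proposal is correct and follows essentially the same route as the paper's own proof: termwise $t$-differentiation of the series for the left-hand side, the recurrence $a_{k+1}=\frac{\partial^2}{\partial x^2}a_k+\frac{\partial}{\partial x}\{x\,a_k\}$ together with Lemma \ref{lma1} applied to $t^{k(1-\alpha)}$ for the right-hand side, and a reindexing that makes the two series identical. The only differences are additions on your part --- the explicit check of the initial condition at $t=0^+$ and the justification for interchanging $\partial_t$, the spatial operator, and $_0\mathcal{D}_t^{\alpha,\beta}$ with the infinite sum --- both of which the paper silently omits.
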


\begin{proof}
First, on differentiating  Eq. \eqref{pol16} with respect to $t$,
we
obtain%
\begin{align}\label{shw1}   
\nonumber \frac{\partial}{\partial t} c(x,t) & =
\frac{\partial}{\partial t}\left[a_0(x) + \Sigma_{k=1}^{\infty} a_k(x) \frac{\kappa^kt^{k(1-\alpha)}}{\Gamma[k(1-\alpha)+1]}\right]\\
\nonumber &=\Sigma_{k=1}^{\infty}
a_k(x)\frac{\kappa^k}{\Gamma[k(1-\alpha)+1]}\frac{\partial}{\partial
t}t^{k(1-\alpha)}\\
\nonumber &=\Sigma_{k=1}^{\infty}
a_k(x)\frac{\kappa^k}{k(1-\alpha)\Gamma[k(1-\alpha)]}k(1-\alpha)t^{k(1-\alpha)-1}\\
 &=\Sigma_{k=1}^{\infty}
a_k(x)\frac{\kappa^kt^{k(1-\alpha)-1}}{\Gamma[k(1-\alpha)]}.
\end{align}
On the other hand substituting Eq. \eqref{pol16} in the right hand
side of Eq. \eqref{pol1} yields
\begin{align}\label{shw2}   
\nonumber &\kappa
\ _0\mathcal{D}_t^{\alpha,\beta}\left[\frac{\partial^2 c(x,t)}{\partial
x^2}+\frac{\partial}{\partial x}\left\{x c(x,t)\right\}\right] \\
\nonumber
&=\kappa\ _0\mathcal{D}_t^{\alpha,\beta}\left[\Sigma_{k=0}^{\infty}\left\{\frac{\partial^2}{\partial
x^2}a_k(x)+\frac{\partial}{\partial
x}x^qa_k(x)\right\}\frac{\kappa^kt^{k(1-\alpha)}}{\Gamma[k(1-\alpha)+1]}\right]
\\
\nonumber
&=\kappa\Sigma_{k=0}^{\infty}\left\{\frac{\partial^2}{\partial
x^2}a_k(x)+\frac{\partial}{\partial
x}x^qa_k(x)\right\}\ _0\mathcal{D}_t^{\alpha,\beta}\frac{\kappa^kt^{k(1-\alpha)}}{\Gamma[k(1-\alpha)+1]}
\\
\nonumber
&=\Sigma_{k=0}^{\infty}a_{k+1}(x)\frac{\kappa^{k+1}t^{(k+1)(1-\alpha)-1}}{\Gamma[(k+1)(1-\alpha)]}
\\
&=\Sigma_{k=1}^{\infty}a_{k}(x)\frac{\kappa^{k}t^{k(1-\alpha)-1}}{\Gamma[k(1-\alpha)]}.
\end{align}
The equality of Eqs. \eqref{shw1} and \eqref{shw2} proves that
$u(x,t)$ given by Eq. \eqref{pol16} is the solution of problem
\eqref{pol1}.
\end{proof}

\subsection{Examples}\label{sec4p4}

\begin{figure}[htbp]
  \begin{minipage}[b]{0.5\linewidth}
    \centering
    \includegraphics[width=\linewidth]{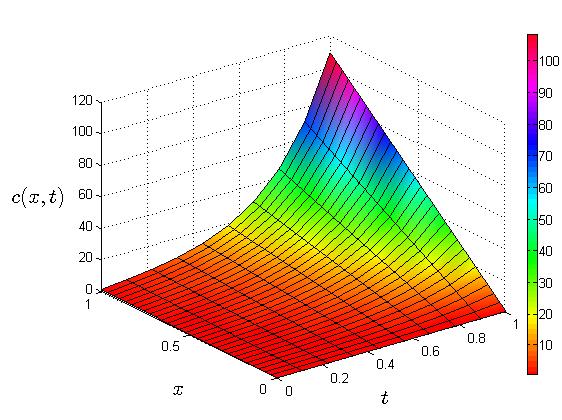}
    \caption{Plot of the solution $c(x,t)$, Eq. \eqref{exp1} (Example 1),
  for $0\leq x\leq 1$ and time $0 <t<1$, where $p=1$, $\alpha=0.5$ and $\kappa=1$.}
    \label{fig1}
  \end{minipage}
  \hspace{1cm}
  \begin{minipage}[b]{0.5\linewidth}
    \centering
    \includegraphics[width=\linewidth]{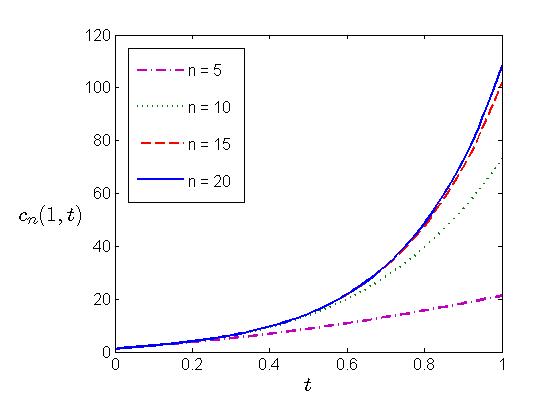}
    \caption{Plots of the truncated series solution $c_n(x,t)$, Eq. \eqref{exp2},
   (Example 1)  for $0<t<1$ at $x=1$, where $p=1$, $\alpha=0.5$ and $\kappa=1$, for different $n$ as indicated.}
    \label{fig2}
  \end{minipage}
\end{figure}

We examine the solutions for the cases $p=1$ and $2$.

For $p=1$, the initial condition becomes $c(x,0)=x$, in Eq.
\eqref{pol1}.  Then $a_k(x)=2^kx$ for $k\geq 0$ and hence from Eq.
\eqref{pol16} the solution $c(x,t)$ is expressed as follows
\begin{equation}\label{exp1}  
c(x,t)=x\Sigma_{k=0}^{\infty}\frac{[2\kappa
t^{1-\alpha}]^k}{\Gamma[k(1-\alpha)+1]}=xE_{1-\alpha}[2\kappa
t^{1-\alpha}],
\end{equation}
where $E_{\alpha}(t)$, defined in Eq. \eqref{pre7}, is the
Mittag-Leffler function in one parameter. The plot of the solution
\eqref{exp1} is shown in the Fig. \ref{fig1} for the values
$\alpha=0.5$ and $\kappa=1$. Note that the solution $c(x,t)$, for
fixed $t$,  increases linearly with respect to variable $x$ and it
increases exponentially with respect to variable $t$, for fixed
$x$. \vskip 0.2cm \noindent In order to see, how rapidly the
sequence of successive approximations provided by VIM converges to
the exact solution, we use the $n$th partial sum as an
approximation,
\begin{equation}\label{exp2}   
c_n(x,t)=x\Sigma_{k=0}^{n}\frac{[2\kappa
t^{1-\alpha}]^k}{\Gamma[k(1-\alpha)+1]},
\end{equation}
In Fig. \ref{fig2}, we plot $c_n(x,t)$ against $t$ at $x=1$ for
$\alpha=0.5$ and $\kappa=1$, for different values of $n$.  One can
see from Fig. \ref{fig2} that the solution converges by $n=20$.
Later in Section \ref{num}, we will provide details about how many
terms have to be summed up in order to obtain a given accuracy.

\begin{figure}[htbp]
  \begin{minipage}[b]{0.5\linewidth}
    \centering
    \includegraphics[width=\linewidth]{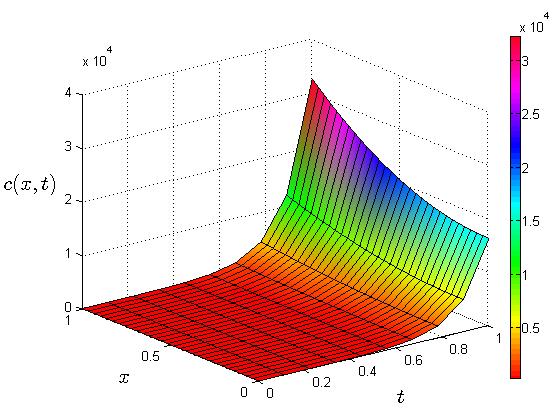}
    \caption{Plot of the solution $c(x,t)$, Eq. \eqref{exp3} (Example 2),
  for $0\leq x\leq 1$ and time $0 <t<1$, where $p=2$, $\alpha=0.5$ and $\kappa=1$.}
  \label{fig3}
 \end{minipage}
  \hspace{1cm}
  \begin{minipage}[b]{0.5\linewidth}
    \centering
    \includegraphics[width=\linewidth]{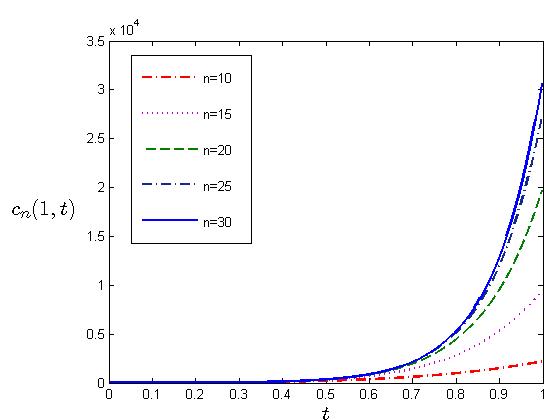}
    \caption{Plots of the truncated series solution $c_n(x,t)$, Eq. \eqref{exp4},
    (Example 2),  for $0<t<1$ at $x=1$, where $p=2$, $\alpha=0.5$ and $\kappa=1$, for different $n$ as indicated.}
    \label{fig4}
  \end{minipage}
\end{figure}

For $p=2$ the initial condition becomes $c(x,0)=x^2$, in Eq.
\eqref{pol1}. Then $a_k(x)=3^kx^2+3^k-1$ for $k\geq 0$ and hence
from Eq. \eqref{pol16} the solution $c(x,t)$ is expressed as
follows
\begin{equation}\label{exp3}   
c(x,t)=x^2E_{1-\alpha}[3\kappa t^{1-\alpha}]+E_{1-\alpha}[3\kappa
t^{1-\alpha}]-E_{1-\alpha}[\kappa t^{1-\alpha}].
\end{equation}
The plot of the solution \eqref{exp3} is shown in the Fig.
\ref{fig3} for $\alpha=0.5$ and $\kappa=1$.  This time, the
solution $c(x,t)$, for fixed $t$,  increases quadratically with
respect to variable $x$ and  it increases exponentially with
respect to variable $t$, for fixed $x$.

\vskip 0.2cm \noindent Again, we use the $n$th approximation in
order to see how rapidly the sequence of successive approximations
provided by VIM converges to the exact solution:
\begin{align}\label{exp4}   
\nonumber c_n(x,t)&=x^2\Sigma_{k=0}^{n}\frac{[3\kappa
t^{1-\alpha}]^k}{\Gamma[k(1-\alpha)+1]}+\Sigma_{k=0}^{n}\frac{[3\kappa
t^{1-\alpha}]^k}{\Gamma[k(1-\alpha)+1]}\\
&\quad -\Sigma_{k=0}^{n}\frac{[\kappa
t^{1-\alpha}]^k}{\Gamma[k(1-\alpha)+1]},
\end{align}
and plot it for different values of $n$. Figure \ref{fig4} shows
the plots of $c_n(x,t)$ against $t$ for $x=1$, $\alpha=0.5$,
$\kappa=1$, and for different $n$ as indicated. This time the
approximate solutions converge at $n=25$.

When $p\geq3$, closed form solutions for $c(x,t)$ becomes
increasingly harder to obtain. Nevertheless,  for the purposes of
analyzing the behavior of the solution $c(x,t)$ we require only
the dominant term in the solution. As mentioned in Section
\ref{sec4p2} that we can approximate $a_n(x)$ by its leading term,
that is by $(p+1)^nx^p$. If we replace $a_n(x)$ by $(p+1)^nx^p$ in
Eq. \eqref{pol16}, we obtain the the leading term to be,
\begin{align}\label{exp5}    
\nonumber c(x,t)&\approx\Sigma_{k=0}^{\infty}(p+1)^nx^p
\frac{\kappa^kt^{k(1-\alpha)}}{\Gamma[k(1-\alpha)+1]}\\
\nonumber &\approx x^p\Sigma_{k=0}^{\infty} \frac{[(p+1)\kappa
t^{1-\alpha}]^k}{\Gamma[k(1-\alpha)+1]}\\
&\approx x^pE_{1-\alpha}[(p+1)\kappa t^{1-\alpha}].
\end{align}
The solution $c(x,t)$ is thus proportional to $x^p$ at fixed t;
and it increases approximately exponentially with respect to
variable $t$ at fixed $x$.

In order to further investigate the trends in the fractional
solution, we compare the fractional solution to the non-fractional
solution at $x=1$ and $t=1$.

 At $x=1$ and $t=1$, we get $c(1,1)\approx
E_{1-\alpha}[(p+1)\kappa]$, and by using Eq. \eqref{pre7} we
obtain
$$c(1,1)\approx
\sum_{k=0}^{\infty}\frac{[(p+1)\kappa]^k}{\Gamma[k(1-\alpha)+1]}.$$
In the asymptotic limit $\alpha\rightarrow 1$, we obtain $c(1,1)\approx
\sum_{k=0}^{\infty}[(p+1)\kappa]^k$, which is a geometric series
and it converges to $\frac{1}{1-(p+1)\kappa}$, when
$(p+1)\kappa<1$ or $\kappa<\frac{1}{p+1}$.

\vskip 0.2cm \noindent For $\alpha=0$, we obtain
$$c(1,1)\approx \sum_{k=0}^{\infty}\frac{[(p+1)\kappa]^k}{k!}$$
which converges for all $\kappa$ and $p\geq 0$.

Furthermore, in Eq. \eqref{exp5} in the asymptotic limit
$\alpha\rightarrow 1$,  $t^{1-\alpha}= 1$  for all $t$, and thus
the power series \eqref{exp5} converges for all $t$ so long as
$(p+1)\kappa < 1$.

\section{Numerical Analysis and behavior of the
solution}\label{num}

In this section we discuss the general behavior of the solution of
the problem \eqref{pol1} with respect to different parameters and
also we do some numerical analysis of the solution.

\subsection{Reciprocal Gamma Function} 

 First we analyze the reciprocal gamma function
which appears in the series solution \eqref{pol16}, that is,
\begin{equation*}
\frac{1}{\Gamma[n(1-\alpha)+1]}.
\end{equation*}
The limit of this function, for a fixed $\alpha(\neq 1)$, as $n
\to \infty$ is zero, that is,
\begin{equation*}
\lim_{n \to \infty}\frac{1}{\Gamma[n(1-\alpha)+1]}=0, \quad \text{
for fixed } \alpha\in(0,1).
\end{equation*}
We want an expression for the number of terms $n$ needed in
order to satisfy a given accuracy given by,
\begin{equation*}
\frac{1}{\Gamma[n(1-\alpha)+1]}\leq 10^{-\tau},
\end{equation*}
 for a given $\alpha$ and
tolerance level $\tau$. For this purpose, we use the following
formula, see \cite{alzr},
\begin{align}
\nonumber &\sqrt{2\pi x}(x/e)^x(x\sinh 1/x)^{x/2}(1+a/x^5) <
\Gamma(x+1)\\
\nonumber &< \sqrt{2\pi x}(x/e)^x(x\sinh 1/x)^{x/2}(1+b/x^5),
\end{align}
for all $x>0$ and with the optimal constants $a=0$ and
$b=\frac{1}{1620}$, to obtain the required $n$. Table \ref{Ta:1}
summarizes this information.

Figure \ref{fig6} shows the plots of $n$ against $\tau$ for
different values of $\alpha$.  For any given $\alpha$ the number
of terms $n$ appears to scale almost  linearly with $\tau$. Best
linear fits were therefore obtained; for example for $\alpha=0.9$,
we obtain the best fit $n=40.79+8.964\tau$. The graph of this
linear equation is the solid line shown in the Fig. \ref{fig6}
along with the $95\%$ prediction intervals, whereas the simulation
data are plotted as the symbols. The value of $R^2$, the coefficient
of determination, is $99.8\%$ represents the percent of observed
variability explained by the linear model, whereas the value of
$R_{adj}^2$ is $99.7\%$ which is a more realistic quantity as it
accounts for the number of terms in the model.

\begin{table}[t]
    \centering
    \begin{tabular}{*{11}{c}}\toprule
      \multirow{2}{*}[-0.5ex]{$\tau$} & \multicolumn{10}{c}{$\alpha$} \\ \cmidrule{2-11}
       &$0$ & $0.1$ & $0.2$ & $0.3$ & $0.4$ & $0.5$ & $0.6$ &
$0.7$ & $0.8$ & $0.9$\\ \midrule $4$ &$8$ & $9$ & $10$ & $11$
&$13$ & $15$ & $19$ & $25$ & $37$ &
$74$ \\
$6$ &$10$ & $11$ & $12$ & $14$ & $16$ & $19$ & $24$ &
$32$ & $48$ & $95$ \\
$8$ &$12$ & $13$ & $15$ & $17$ & $19$ & $23$ & $29$ & $38$ &
$57$ & $114$ \\
      $10$ &$14$ & $15$ & $17$ & $19$ & $22$ & $27$ & $33$ &
$44$ & $66$ & $132$ \\
    $12$ &$15$ & $17$ & $19$ & $22$ & $25$ & $30$ & $38$ &
$50$ & $75$ & $150$ \\
  $14$ &$17$ & $19$ & $21$ & $24$ & $28$ & $34$ & $42$ &
$56$ & $83$ & $166$ \\
      $16$ & $19$ & $21$ & $23$ & $26$ & $31$ & $37$ & $46$ &
$61$ & $91$ & $182$ \\  \bottomrule
    \end{tabular}
    \vskip 0.4cm
    \caption{The number of terms, $n$, needed to achieve a given tolerance level $10^{-\tau}$ for different $\alpha$.}\label{Ta:1}
  \end{table}
\begin{figure}[htbp]
  \begin{minipage}[b]{0.5\linewidth}
    \centering
    \includegraphics[width=\linewidth]{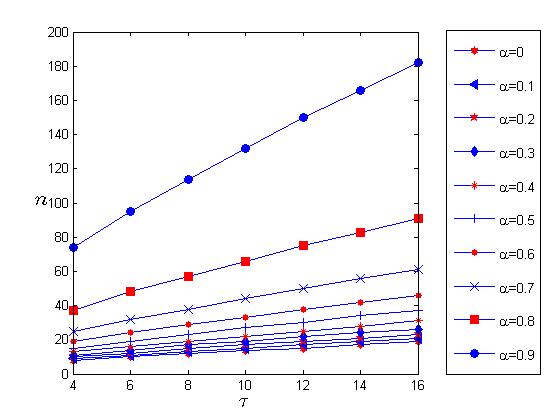}\\
    \caption{Plots of $n$ against $\tau$, for specific values of $\alpha$ }\label{fig5}
  \end{minipage}
  \begin{minipage}[b]{0.5\linewidth}
    \centering
    \includegraphics[width=\linewidth]{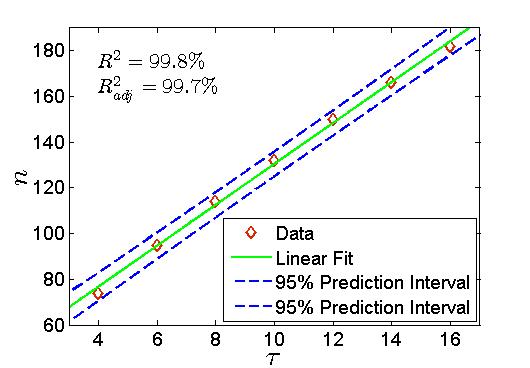}\\
     \caption{A linear relationship is observed between $\tau$ and $n$. Here, $\alpha=0.9$.}\label{fig6}
  \end{minipage}
\end{figure}

%


Figure \ref{fig8} shows the plots of $n$ against $\alpha$ for some
tolerance levels, taken from  the data in Table \ref{Ta:1}. The
relationship between $n$ and $\alpha$ is clearly non-linear.

For $\alpha\in(0,0.7)$  the number of terms
needed for a given tolerance level $10^{-\tau}$ remains
approximately constant; in fact even for different tolerance
levels $n$ appears to be approximately insensitive to $\alpha$ and
to $\tau$ -- as a rule of thumb we see that $n\approx 20$ for
$\alpha<0.7$.

But for $\alpha> 0.7$ we see a rapid increase in the number of
terms $n$ needed to achieve a given accuracy.  It may be possible
to find best-fit curves to the data plotted in Fig. \ref{fig8}.
For this purpose, we assume cubic polynomial fits of the form
$n(\alpha)=A+B\alpha+C\alpha^2+D\alpha^3$, where $A,B,C$ and $D$
are constants to be determined from the data using the least
square method,  for a tolerance level of $10^{-16}$, we obtain the
following cubic polynomial
$n(\alpha)=13.69+181.3\alpha-665.6\alpha^2+730.4\alpha^3$. The
graph of this cubic polynomial is the solid line shown in the Fig.
\ref{fig9} along with the $95\%$ prediction interval, whereas the 
simulation data are plotted symbols. The value of $R^2$, is $97.5\%$, 
and $R_{adj}^2$ is $96.2\%$.
\begin{figure}[htbp]
  \begin{minipage}[b]{0.5\linewidth}
    \centering
    \includegraphics[width=\linewidth]{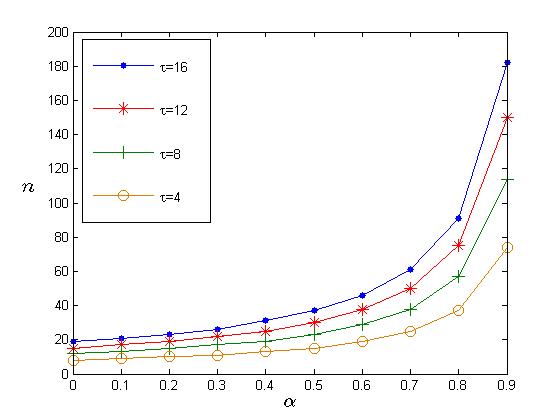}\\
    \caption{Plots of $n$ against $\alpha$, for specific tolerance levels
  $10^{-\tau}$.}\label{fig8}  
  \end{minipage}
  \begin{minipage}[b]{0.5\linewidth}
    \centering
    \includegraphics[width=\linewidth]{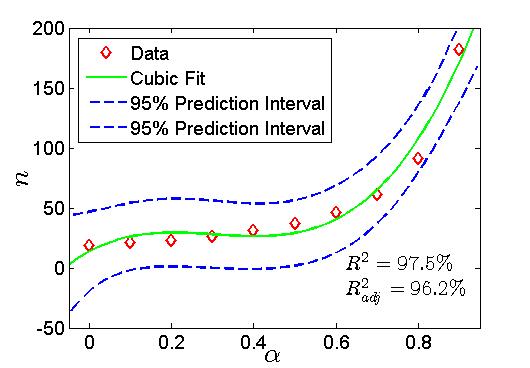}\\
    \caption{ A cubic relationship is observed between $\alpha$ and $n$. Here,
  the tolerance level is $10^{-16}$.}\label{fig9}  
  \end{minipage}
\end{figure}

  \begin{table}[!ht]
    \centering
    \begin{tabular}{*{12}{c}}\toprule

      \multirow{2}{*}[-0.5ex] {$p$} &  \multirow{2}{*}[-0.5ex] { $\kappa$} & \multicolumn{10}{c}{$\alpha$} \\

\cmidrule{3-12}
   &    &$0$ & $0.1$ & $0.2$ & $0.3$ & $0.4$ & $0.5$ & $0.6$ &
$0.7$ & $0.8$ & $0.9$\\

\midrule \addlinespace \\

1&$0.1$ &$1.2214$ & $1.2339$ & $1.2456$ &  $1.2563$ &
 $1.2655$ & $1.2726$ & $1.2769$ & $1.2777$ & $1.2740$ & $1.2650$ \\
 & $0.2$ &$1.4918$ & $1.5288$ & $1.5666$ &  $1.6046$ &
 $1.6417$ & $1.6762$ & $1.7056$ & $1.7257$ & $1.7311$ & $1.7140$ \\
 &$0.3$ &$1.8221$ & $1.9018$ & $1.9891$ &  $2.0845$ &
 $2.1881$ & $2.2989$ & $2.4132$ & $2.5225$ & $2.6064$ & $2.6228$ \\
 & $0.4$ &$2.2255$ & $2.3745$ & $2.5486$ &  $2.7548$ &
 $3.0024$ & $3.3039$ & $3.6758$ & $4.1366$ & $4.6886$ & $5.2181$ \\
& $0.5$ &$2.7183$ & $2.9749$ & $3.2946$ &  $3.7041$ &
 $4.2486$ & $5.0090$ & $6.1471$ & $8.0407$ & $11.8230$ & $23.1605$ \\
& $0.6$ &$3.3201$ & $3.7394$ & $4.2952$ &  $5.0672$ &
 $6.2094$ & $8.0628$ & $11.5427$ & $20.0518$ & $58.6544$ & $916.0995$ \\

\midrule \addlinespace \\
2&$0.1$ &$1.5945$ & $1.6353$ & $1.6758$ & $1.7151$ & $1.7518$ &
$1.7839$ & $1.8084$ &
$1.8219$ & $1.8194$ & $1.7957$ \\
&$0.2$ & $2.4228$ & $2.5697$ & $2.7325$ & $2.9127$ & $3.1107$ &
$3.3251$ & $3.5495$ &
$3.7672$ & $3.9389$ & $3.9806$ \\
&$0.3$ &$3.5693$ & $3.9406$ & $4.3938$ & $4.9582$ & $5.6782$ &
$6.6248$ & $7.9175$ &
$9.7701$ & $12.5841$ & $16.9670$ \\
&$0.33$ &$3.9915$ & $4.4651$ & $5.0595$ & $5.8263$ & $6.8503$ &
$8.2856$ & $10.4192$ &
$13.9340$ & $20.7619$ & $39.3061$ \\
 &     $0.4$ &$5.1418$ & $5.9195$ & $7.0055$ & $8.5392$ & $10.7960$ & $14.4464$ & $21.3697$ &
$38.3904$ & $115.8880$ & $9036.86$ \\

\midrule \addlinespace \\
3&$0.1$ &$2.3031$ & $2.4136$ & $2.5300$ & $2.6500$ & $2.7704$ &
$2.8871$ & $2.9914$ &
$3.0700$ & $3.1023$ & $3.0610$ \\
&$0.15$ &$3.2389$ & $3.4888$ & $3.7704$ & $4.0874$ & $4.4422$ &
$4.8342$ & $5.2536$ &
$5.6714$ & $6.0144$ & $6.1219$ \\
&$0.2$ &$4.4267$ & $4.9114$ & $5.4945$ & $6.2053$ & $7.0842$ &
$8.1867$ & $9.5866$ &
$11.3700$ & $13.5612$ & $15.7306$ \\
&$0.25$ &$5.9270$ & $6.7868$ & $7.8886$ & $9.3416$ & $11.3284$ &
$14.1788$ & $18.5512$ &
$25.9800$ & $41.0229$ & $86.4111$ \\
 &     $0.3$ &$7.8141$ & $9.2522$ & $11.2138$ & $14.0152$ & $18.2732$ & $25.3548$ & $38.9312$ &
$72.6400$ & $227.4317$ & $18069.2957$ \\

\midrule \addlinespace \\
4&$0.10$ &$3.6044$ & $3.9008$ & $4.2330$ & $4.6032$ & $5.0110$ &
$5.4495$ & $5.8996$ &
$6.3177$ & $6.6170$ & $6.6440$ \\
&$0.14$ &$5.3248$ & $5.9671$ & $6.7391$ & $7.6743$ & $8.8142$ &
$10.2060$ & $11.8918$ &
$13.8682$ & $15.9573$ & $17.4156$ \\
&$0.18$ &$7.5956$ & $8.8247$ & $10.4021$ & $12.4722$ & $15.2609$ &
$19.1376$ & $24.7363$ &
$33.2074$ & $46.7204$ & $68.5890$ \\
&$0.20$ &$8.9816$ & $10.6300$ & $12.8138$ & $15.7965$ & $20.0255$
& $26.3213$ & $36.3427$ &
$53.9703$ & $90.7746$ & $203.926$ \\
 &     $0.22$ &$10.5624$ & $12.7377$ & $15.7134$ & $19.9434$ & $26.2635$ & $36.7332$ & $54.2600$ &
$91.6757$ & $202.0610$ & $1272.2308$ \\

\bottomrule %
    \end{tabular}
    \vskip 0.5cm
        \caption{The solution $c(1,1)$ for different parameter values $p$,  $\alpha$ and $\kappa$.}\label{Ta:4}
  \end{table}

\subsection{Behavior of the Solution}

The solution of the problem \eqref{pol1} (with $u(x)=-x$, and initial condition $c(x,0)=x^p$, $p>0$) is given by
$$c(x,t)=\Sigma_{n=0}^{\infty}
a_n(x)\frac{(\kappa t^{1-\alpha})^n}{\Gamma[n(1-\alpha)+1]},
$$
where
$$a_n(x)=\frac{\partial^2}{\partial
x^2}a_{n-1}(x)+\frac{\partial}{\partial
x}\left\{xa_{n-1}(x)\right\},
$$ and $a_0(x)=x^p$.

We now examine the sensitivity of the solution $c(x,t)$  to parameters $p$, $\kappa$ the diffusivity coefficient,  and to $\alpha$ the order of the Hilfer derivative.
We choose the following values: $p\in \{1,2,3,4\}$, $\alpha\in\{0,
0.1, ... , 0.9\}$ and we select $\kappa$ according to
$\kappa<\frac{1}{p+1}$. We compare the values of $c(x,t)$ at the
point $(1,1)$ for different combinations of the above parameters.
We choose three values of $\kappa$ that are less than
$\frac{1}{p+1}$ and one value that is equal (approximately) to
$\frac{1}{p+1}$ and one value that is  greater than
$\frac{1}{p+1}$. The data is collected in Table \ref{Ta:4} and it 
reveals that different combinations of parameter's
values affect the values of $c(x,t)$ differently. In general, we
note that an increase in the values of parameters $p$, $\kappa$
and $\alpha$ results into an increase in the values of $c(x,t)$.
The effect of $\kappa$ can be divided into two parts, first when
$\kappa<\frac{1}{p+1}$ and second when $\kappa\geq \frac{1}{p+1}$.

For the first case, $\kappa<\frac{1}{p+1}$, the increase in the
values of $c(x,t)$ is not significant as compare to the second case,
$\kappa\ge\frac{1}{p+1}$, where $c(x,t)$
increases very rapidly with $\alpha$ and $\kappa$.
%
%
%
\begin{figure}[htbp]
  \begin{minipage}[b]{0.5\linewidth}
    \centering
    \includegraphics[width=\linewidth]{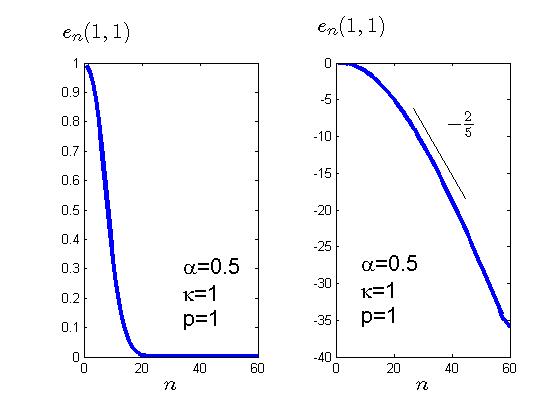}
    \caption{\small{Plot of the relative error $e_n(1,1)$ against
    the number of terms $n$, from example 1. Left: Linear-Linear,
      Right: Linear-Natural Log (y-axis is scaled as $e^{m}$,
      where $m\in\{-40,-35,\dots,0\}$). A line of slope $-2/5$ is shown for comparison.}}
    \label{fig11} 
  \end{minipage}
  \hspace{0.2cm}
  \begin{minipage}[b]{0.5\linewidth}
    \centering
    \includegraphics[width=\linewidth]{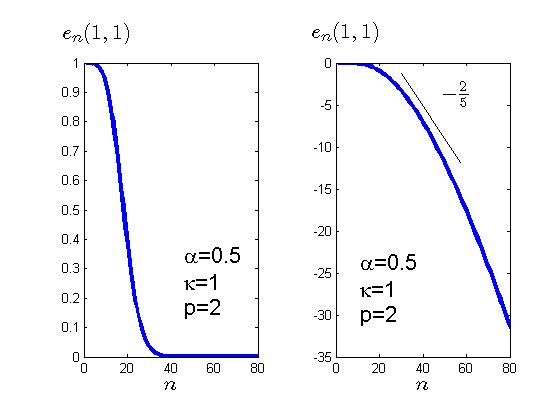}
    \caption{\small{Plot of the relative error $e_n(1,1)$ against the number of terms $n$, from example 2. Left: Linear-Linear,
       Right: Linear-Natural Log (y-axis is scaled as $e^{m}$,
      where $m\in\{-35,-30,\dots,0\}$). A line of slope $-2/5$ is shown for comparison.}}
    \label{fig12}  
  \end{minipage}
\end{figure}

\subsection{Relative Error}\label{RelErr}

Finally, we examine the truncation error when using the partial
sum of first $n$ terms of the series solution given in Eq.
\eqref{pol16}. We define the relative error as,
$$e_n(x,t)=\frac{|c(x,t)-c_n(x,t)|}{|c(x,t)|},$$
where $c_n(x,t)$ is the partial sum of the first $n$ terms.
Figures \ref{fig11} and \ref{fig12} show the plots of $e_n(x,t)$
versus $n$ at the point $(1,1)$ for, respectively, $p=1$ (Example
1), and $p=2$ (Example 2). The errors fall off exponentially fast
with $n$, $e_n\approx \exp(-2/5)$.

It is not always possible to express the series solution in
compact form like in the cases of $p = 1$ and $p = 2$. For  values of $p$
larger than $2$, we approximate the 'exact' solution by taking  a
very large value of $n$,  and then we compare this with the
smaller values of $n$. For example, when the initial condition is
$f(x)=x^3$, $(p=3)$, we take $n=200$,  and $c_{200}(x,t)$ is the
approximation to the exact solution. Figure \ref{fig13} shows the
plot of
$$e_n(x,t)=\frac{|c_{200}(x,t)-c_n(x,t)|}{|c_{200(x,t)}|},$$
for $1\leq n \leq 100$. Again the error falls off exponentially
fast like $e_n\approx \exp(-2/5)$. This indicates that the numerical
convergence is independent of the power exponent $p$.

\begin{figure}[t]
 \includegraphics[width=12cm]{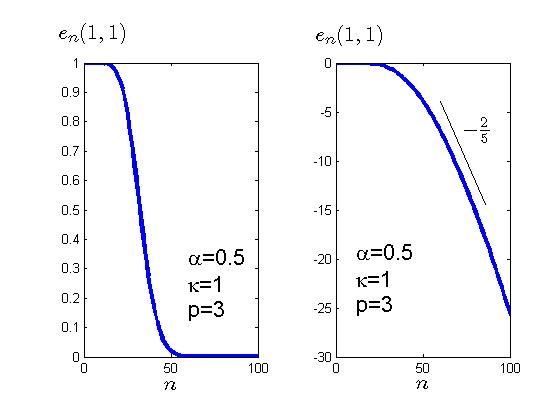}\\
    \caption{\small{Plot of the relative error $e_n(1,1)$ against the number of terms
       $n$, for $p=3$. Left: Linear-Linear,    Right: Linear-Natural Log (y-axis is scaled as $e^{m}$,
      where $m\in\{-30,-25,\dots,0\}$). A line of slope $-2/5$ is shown for comparison.}}
\label{fig13}  
\end{figure}

These results show that the VIM  solutions for this problem
convergence rapidly, and actually improves  as $n$ increases.

\section{Fractional versus Conventional Solutions}\label{FrVsCon}

 In this section, we compare the solutions of the fractional
 differential Eq. \eqref{pol1} with the corresponding
 conventional differential equation. \vskip 0.2cm

 Conventional version of Eq. \eqref{pol1} can be obtained by
taking $\alpha=0$, see \cite{kilbas}, that is,
\begin{equation}\label{comp1}
\frac{\partial{c(x,t)}}{\partial{t}}=\kappa\left[\frac{\partial^2c(x,t)}{\partial
x^2}+\frac{\partial}{\partial
x}\left\{xc(x,t)\right\}\right],\quad x> 0, t > 0.
\end{equation}

The solution of Eq. \eqref{comp1} with initial condition
$c(x,0)=x$ is
\begin{equation}\label{comp2}
c(x,t)=x e^{2\kappa t},
\end{equation}
and the general solution for $\alpha >0$  is given by, Section
\ref{sec4p4},
\begin{equation}\label{comp3}
c(x,t)=x E_{1-\alpha}[2 \kappa t^{1-\alpha}].
\end{equation}
  Figure \ref{fig14} shows the plots of the solution \eqref{comp3}
for different values of $\alpha$, that also includes the case
$\alpha=0$  at $x=1$, for $0<t<1$ and $\kappa=0.4$.

 Thus, the relative magnitude of the solutions compared to the
conventional case can be estimated from,
\begin{equation}\label{comp4}
{c^\alpha(x,t)\over c^0(x,t)}  \approx {E_{1-\alpha} [2 \kappa
t^{1-\alpha}] \over e^{2 \kappa t}}.
\end{equation}
\begin{figure}[htbp]
  \begin{minipage}[b]{0.5\linewidth}
    \centering
    \includegraphics[width=\linewidth]{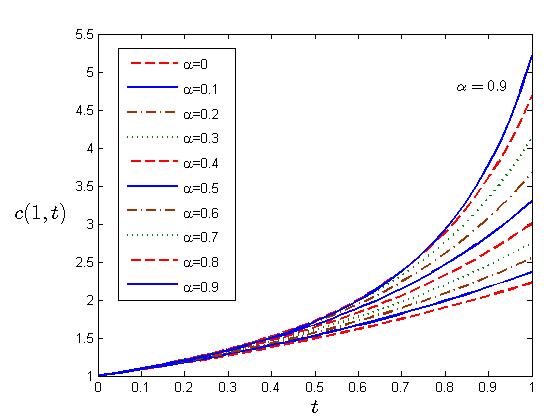}
    \caption{Plot of the solution $c(x,t)$, Eq. \eqref{comp3},
  when $x=1$,  $0 <t<1$ and $\kappa=0.4$.}\label{fig14} 
  \end{minipage}
  \hspace{0.5cm}
  \begin{minipage}[b]{0.5\linewidth}
    \centering
    \includegraphics[width=\linewidth]{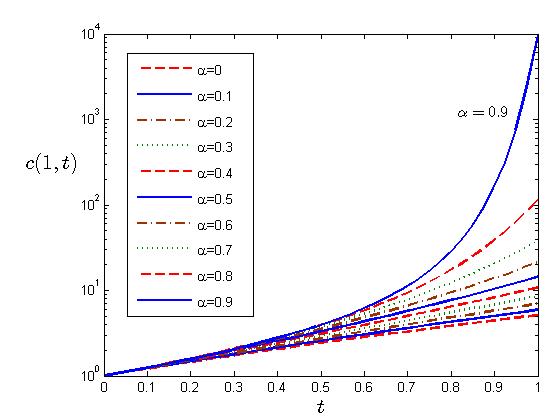}
     \caption{Plot of the solution $c(x,t)$, Eq. \eqref{comp6},
  when $x=1$,  $0 <t<1$ and $\kappa=0.4$.}\label{fig15} 
  \end{minipage}
\end{figure}

Asymptotic behavior of the above expression can be analyzed through the
long time behavior of Mittag-Leffler function. By using Theorem 1.3 of
\cite{igor}, we can write
\begin{equation}\label{comp4a}
{c^\alpha(x,t)\over c^0(x,t)}  \approx \exp\{((2\kappa)^{1/(1-\alpha)}-2\kappa)t\}.
\end{equation}
The long time behavior depends on the value of $\kappa$.
If $\kappa < 1/2$ then ${c^\alpha/ c^0}  \to 0$;  if $\kappa = 1/2$ then ${c^\alpha/ c^0}  \to constant$ and the fractional solution scales with the conventional solution; if $\kappa > 1/2$ then ${c^\alpha/ c^0}  \to \infty$.

The solution of Eq. \eqref{comp1} with initial condition
$c(x,0)=x^2$ is
\begin{equation}\label{comp5}
c(x,t)=x^2e^{3\kappa t}+e^{3\kappa t}-e^{\kappa t},
\end{equation}
where as, the fractional solution, Example 2 in Section
\ref{sec4p4}, is
\begin{equation}\label{comp6}
c(x,t)=x^2E_{1-\alpha}[3\kappa t^{1-\alpha}]+E_{1-\alpha}[3\kappa
t^{1-\alpha}]-E_{1-\alpha}[\kappa t^{1-\alpha}].
\end{equation}
Figure \ref{fig15} shows the plots of the solution \eqref{comp6}
for different values of $\alpha$, that also includes the case
$\alpha=0$ which corresponds to the classical case,  at $x=1$, for
$0<t<1$ and $\kappa=0.4$.

The relative magnitude of the solutions compared to the
conventional case can be estimated from,
\begin{equation}\label{comp7}
{c^\alpha(x,t)\over c^0(x,t)}  \approx {x^2E_{1-\alpha}[3\kappa
t^{1-\alpha}]+E_{1-\alpha}[3\kappa
t^{1-\alpha}]-E_{1-\alpha}[\kappa t^{1-\alpha}] \over
x^2e^{3\kappa t}+e^{3\kappa t}-e^{\kappa t}}.
\end{equation}

To leading order, Eq. \eqref{comp7} is,
\begin{equation}\label{comp8}
{c^\alpha(x,t)\over c^0(x,t)}  \approx {E_{1-\alpha}[3\kappa
t^{1-\alpha}] \over e^{3\kappa t}}.
\end{equation}
and is easily shown that for general $p>1$, i.e. for initial conditions
$c(x,0)=x^p$, the corresponding relative magnitude is given by,
\begin{equation}\label{comp9}
{c^\alpha(x,t)\over c^0(x,t)}  \approx {E_{1-\alpha}[(p+1)\kappa
t^{1-\alpha}] \over e^{(p+1)\kappa t}};
\end{equation}
and the large time behaviour is,
\begin{equation}\label{comp9}
{c^\alpha(x,t)\over c^0(x,t)}  \to
\exp\{(((p+1)\kappa)^{1/(1-\alpha)}-(p+1)\kappa)t\} \ \ \ \ {\rm as}\ t\to\infty
\end{equation}

If $\kappa < 1/(1+p)$ then ${c^\alpha/ c^0}  \to 0$;  if $\kappa = 1/(1+p)$ then ${c^\alpha/ c^0}  \to constant$ and the fractional solution scales with the conventional solution; if $\kappa > 1/(1+p)$ then ${c^\alpha/ c^0}  \to \infty$.

\section{Conclusion}\label{conc}

Some complex physical phenomenon such as crowded systems, and transport through porous
media are not fully understood at the present time, and in order
to shed  new light into such phenomena a  new modeling
strategy has emerged in recent years which involves casting the
system of interest in terms of fractional calculus.

In this study the Hilfer fractional advection-diffusion equation of order $0<\alpha<1$ and type $0\leq\beta\leq1$,
\begin{equation*}
\frac{\partial{c(x,t)}}{\partial{t}}=\kappa
\ _0\mathcal{D}_t^{\alpha,\beta}\left[\frac{\partial^2c(x,t)}{\partial
x^2}-\frac{\partial}{\partial x}\left\{u(x)c(x,t)\right\}\right],\quad x > 0, t > 0,
\end{equation*}
with advection term $u(x) = -x$, and power law initial conditions of
the type $c(x,t=0) = x^p$ for $p>0$, was investigated numerically
using the Variational Iteration Method (VIM) method, with a view of
comparing its solution to the conventional non-fractional
advection-diffusion system, and also to analyze the system
numerically in order to investigate the efficiency of solving such
systems numerically.

For this class of initial conditions $c(x,t=0) = x^p$ for $p>0$, the
above problem yields the same solutions as Caputo and Riemann-Liouville
advection-diffusion equations. However, we remark that this would be different
in case if there is a source term in the problem or if the velocity
also involves time variable.

Power series solutions were obtained, Eq. \eqref{pol16}. These
yield closed form solutions for specific $p>0$ in terms of the
Mittag-Leffler functions, although it becomes increasingly
difficult to actually calculate it for $p>2$. Nevertheless, the
leading order term can readily  be obtained even for $p>2$, which
allows us to investigate the asymptotic behavior of the solutions
and to carry out some numerical analysis of the method used. The
power series is (absolutely) convergent (for all $x$, and all $t$,
and for $p\geq 0$).

The behavior of the solution was examined by comparing the value
of the solution at a fixed point, namely $|c(1,1)|$.
Asymptotically, the relative magnitude of the solutions is,
$|{c^\alpha(x,t)\over c^0(x,t)}|  \approx
{E_{1-\alpha}({(p+1)\kappa t}) \over e^{(p+1)\kappa t}}$ which
shows that the fractional solution increases approximately
exponential faster than the conventional solution, as seen in
Figs. \ref{fig14} and \ref{fig15}.

 For fixed $p$, the increase in $|c(1,1)|$,  when
$\alpha\in (0,0.7)$ and $\kappa<\frac{1}{p+1}$,  is small; but
when $\alpha> 0.7$ and $\kappa\geq\frac{1}{p+1}$,  $|c(1,1)|$
increases very rapidly. For fixed $\alpha$ and $\kappa$, the
solution increases polynomially with $x$ and exponentially with
$t$. 

For the long time behaviour, as $t\to\infty$, we find that for all $x$ and for all $0<\alpha<1$, if $\kappa < 1/(1+p)$ then ${c^\alpha/ c^0}  \to 0$; and  if $\kappa = 1/(1+p)$ then ${c^\alpha/ c^0}  \to constant$ and the fractional solution scales with the conventional solution; and if $\kappa > 1/(1+p)$ then ${c^\alpha/ c^0}  \to \infty$.

Variational iteration method (VIM) has proved to be an efficient
method for obtaining the series solution of the Hilfer fractional
advection-diffusion equation with the given power law initial
data. Truncation errors $\epsilon(n)$, arising when using the
partial sum as approximate solutions, decay exponentially fast
with the number of terms $n$ used, and then rate of convergence is
independent of $p$ for the cases considered, $\epsilon(n)\sim
\exp(- 2/5)$, for $p=1,2,3$.

The number of terms $n$ required for a given level of accuracy for
$\alpha< 0.7$ are relatively insensitive to both the $\alpha$ and
to the accuracy level required; but for  $\alpha>0.7$ the number
of terms increases rapidly with $\alpha$ and with the accuracy level
required. This threshold $\alpha\approx 0.7$ is consistent with
the analysis of the solutions $|c(1,1)|$ above.

Although these are early days in the development of fractional calculus and numerical
solutions to fractional equations that describe physical systems, it is clear that
numerical methods like VIM will be important tools in extracting solutions of such fractional PDE's in the future.

Future work will address the case when we have non-zero initial
conditions which should yield different solutions for each
$\beta$.


\section*{Acknowledgements} The authors would like to thank NSTIP
for funding through project number 11-OIL1663-04. We also thank to
the ITC department at KFUPM for providing software assistance. 

\section*{Appendix A}
In a single-phase single-component system, let us denote the
concentration of the scalar by $c(x,t)$ at the position $x$ and at
the time instant $t$. First, if $u(x,t)$ is the velocity
field in which the scalar is transported, then the advection
equation (without reaction or diffusion) is \cite{hund},
\begin{equation}\label{eq1}
\frac{\partial{c(x,t)}}{\partial{t}}+\frac{\partial}{\partial
x}\left\{u(x,t)c(x,t)\right\} = 0.
\end{equation}

Second, if the concentration $c(x,t)$ is transported by
diffusion and these changes are caused by gradients in $c(x,t)$
and the fluxes across the boundaries of the region, then the diffusion
equation (without advection or reaction) is \cite{hund},
\begin{equation}\label{eq2}
\frac{\partial c}{\partial{t}}=-\frac{\partial}{\partial
x}\left(J_c\right),
\end{equation}
where $J_c=-d(x,t)\frac{\partial c}{\partial{x}}$ is the flux of
$c$ at the point $(x,t)$, and $d(x,t)$ is the coefficient of
diffusivity. This form in which the flux $J_c$ is proportional to
the scalar gradient is called Fickian diffusion.

Finally, there may be local changes in $c(x,t)$ due
to sources, sinks, and chemical reactions, which is described by
an additional source (reaction) term $f(x,t,c(x,t))$. The reaction
equation (without advection or diffusion) is \cite{hund},
\begin{equation}\label{eq3}
\frac{\partial{c(x,t)}}{\partial{t}} = f(x,t,c(x,t)).
\end{equation}

The general advection-diffusion-reaction equation is obtained by
combining the above three effects and the overall change in the
concentration $c(x,t)$ is described by the following equation \cite{hund}
\begin{equation}\label{eq4}
\frac{\partial{c(x,t)}}{\partial{t}}+\frac{\partial}{\partial
x}\left(u(x,t)c(x,t)\right)=-\frac{\partial}{\partial
x}\left(J_c\right)+f(x,t,c(x,t)).
\end{equation}

Applications of such equations arise in many fields, such as,
atmospheric chemistry, \cite{crut}, air pollution models,
\cite{sport}, climatology, \cite{stoker}, modeling of fluid flow
in homogeneous media, \cite{aziz}, catalysis, \cite{ellis},
combustion, \cite{glass}.

In a similar way nonlinear advection-diffusion-reaction equations
are obtained by using nonlinear conservation laws, see
\cite{tadmor}, \cite{bressan} and \cite{randal} . The nonlinear
advection-diffusion-reaction equation has the form
\begin{equation}
\frac{\partial{c(x,t)}}{\partial{t}}+\frac{\partial}{\partial
x}P\left(u(x,t)c(x,t)\right)=-\frac{\partial}{\partial
x}\left(J_c\right)+f(x,t,c(x,t)).
\end{equation}
where $P\left(u(x,t)c(x,t)\right)$ represents the non-linear convective flux.

\section*{Appendix B}

\vskip 0.2cm \noindent\textbf{\emph{Laplace transform of
fractional derivatives }}\vskip 0.2cm

\begin{align}\label{pre6}
\mathcal{L}[_0^*D_t^{\alpha}f(t);s]:&=s^{\alpha}\widetilde{f}(s)-s^{\alpha-1}f(0^+),
\quad 0 <\alpha\leq 1,\\
\nonumber \text{where} \\
\nonumber f(0^+):&=\lim_{t\rightarrow 0^+}f(t).\\
\mathcal{L}[_0D_t^{\alpha}f(t);s]:&=s^{\alpha}\widetilde{f}(s)-\left(_0I_t^{1-\alpha}f\right)(0^+),
\quad 0<\alpha\leq 1,\\
\nonumber \text{where} \\
\nonumber \left(_0I_t^{1-\alpha}f\right)(0^+):&=\lim_{t\rightarrow
0^+}\left(_0I_t^{1-\alpha}f\right)(t)\\
\mathcal{L}\left[D_t^{\alpha,\beta}f(t);s\right]:&=s^{\alpha}\widetilde{f}(s)-s^{\beta(1-\alpha)}\left[_0I^{(1-\beta)(1-\alpha)}f(0^+)\right],
\quad 0<\alpha<1,\\
\nonumber \text{where} \\
\nonumber
\left(_0I_t^{(1-\beta)(1-\alpha)}f\right)(0^+):&=\lim_{t\rightarrow
0^+}\left(_0I_t^{(1-\beta)(1-\alpha)}f\right)(t)
\end{align}

\vskip 0.2cm \noindent\textbf{Note: } \vskip 0.1cm \noindent One
can see that the differences in these Laplace transforms are in
the initial data $f(0^+)$, $\left(_0I_t^{1-\alpha}f\right)(0^+)$,
and $\left(_0I_t^{(1-\beta)(1-\alpha)}f\right)(0^+)$.

\begin{lemma}\label{lma3}\cite{furati2013non}
 Assume that $f(t)$
is continuous on $[0,T]$, for some $T>0$, then
$$\lim_{t\rightarrow
0^+}\left(_0I_t^{\alpha}f\right)(t)=0,$$ for $\alpha>0$.
\end{lemma}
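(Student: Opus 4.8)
The plan is to bound the fractional integral directly from its definition \eqref{pre1}, exploiting the boundedness that continuity on a compact interval automatically supplies. First I would observe that since $f$ is continuous on the closed interval $[0,T]$, it attains a finite maximum there; set $M := \max_{\tau\in[0,T]}|f(\tau)| < \infty$. For any fixed $t\in(0,T]$, substituting the definition \eqref{pre1} and moving the absolute value inside the integral yields
$$\left|\left(_0I_t^{\alpha}f\right)(t)\right| \le \frac{1}{\Gamma(\alpha)}\int_0^t \frac{|f(\tau)|}{(t-\tau)^{1-\alpha}}\,d\tau \le \frac{M}{\Gamma(\alpha)}\int_0^t (t-\tau)^{\alpha-1}\,d\tau.$$

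Next I would evaluate the remaining elementary integral. Antidifferentiating (equivalently, the change of variable $\tau\mapsto t-\tau$) gives $\int_0^t (t-\tau)^{\alpha-1}\,d\tau = t^{\alpha}/\alpha$, which is finite precisely because $\alpha>0$ renders the singularity at $\tau=t$ integrable. Invoking the functional equation $\alpha\,\Gamma(\alpha)=\Gamma(\alpha+1)$ then collapses the estimate to the clean bound
$$\left|\left(_0I_t^{\alpha}f\right)(t)\right| \le \frac{M\,t^{\alpha}}{\Gamma(\alpha+1)}.$$

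Finally, since $\alpha>0$ we have $t^{\alpha}\to 0$ as $t\to 0^+$, so the right-hand side vanishes in the limit, and the squeeze theorem forces $\lim_{t\to 0^+}\left(_0I_t^{\alpha}f\right)(t)=0$. I do not expect a genuine obstacle in this argument; the only point worth flagging is confirming that the improper integral defining $\left(_0I_t^{\alpha}f\right)(t)$ is well defined for each fixed $t$, which again follows from $\alpha>0$ (integrable singularity) together with the boundedness of $f$. In fact, continuity is stronger than what the estimate requires: mere boundedness and measurability of $f$ would already suffice to drive the same conclusion, so the hypothesis here is comfortable rather than tight.
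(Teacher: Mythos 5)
Your proof is correct: the bound $\left|\left(_0I_t^{\alpha}f\right)(t)\right| \le M t^{\alpha}/\Gamma(\alpha+1)$ with $M=\max_{[0,T]}|f|$ is exactly the standard argument, and the limit follows since $\alpha>0$. The paper itself offers no proof of this lemma --- it simply cites \cite{furati2013non} --- and your argument is the canonical one found there, so there is nothing to reconcile; your closing remark that boundedness and measurability alone would suffice is also accurate.
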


%
%

\end{document}